\renewcommand{\leq}{\leqslant}
\renewcommand{\geq}{\geqslant}
\newcommand{\eps}{\varepsilon}
\newcommand{\aset}[1]{\{#1\}}
\newcommand{\st}{\;\ifnum\currentgrouptype=16 \middle\fi|\;} 
\DeclareMathOperator*{\argmin}{arg\,min}
\newcommand{\lpare}[1]{\left(#1\right)}
\newcommand{\defi}[1]{{\em #1}}
\newcommand{\cspan}[1]{\ifstrempty{#1}{span}{\operatorname{span}(#1)}}
\newcommand{\strip}{time-strip}
\newcommand{\goodClu}{well-separated}
\newcommand{\bleftmost}{leftmost}
\newcommand{\goodSeq}{$k$-good sequence}
\newcommand{\kgoodSeq}[1]{{#1}-good sequence}
\newcommand{\kcentro}{\textsl{$k$Center}}
\newcommand{\kcsoma}{\textsl{$k$-KinClust1D-SD}} 
\newcommand{\kcmax}{\textsl{$k$-KinClust1D-MD}}
\newcommand{\gp}{\textsc{GreedyPartition}}
\newcommand{\bsearch}{\textsc{$k$-clusteringBS$_{\eps}$}}
\newcommand{\Oh}{\operatorname{O}}
\newcommand{\diam}{\operatorname{diam}}
\newcommand{\sd}{\operatorname{sd}}
\newcommand{\optsd}{\operatorname{sd}^*}
\newcommand{\sdrec}{\operatorname{sd-rec}}
\newcommand{\md}{\operatorname{md}}
\newcommand{\optmd}{\operatorname{md}^*}
\newcommand{\area}{\operatorname{area}}
\newcommand{\Snk}[2]{\genfrac{\lbrace}{\rbrace}{0pt}{}{#1}{#2}}
\newcommand{\dc}{\mathscr{C}}
\newcommand{\restric}[2]{\ensuremath{#1_{#2}}}
\newcommand{\clus}{\mathcal{C}}
\newcommand{\dlus}{\mathcal{D}}
\theoremstyle{definition}
\newtheorem{thm}{Theorem}[section]
\newtheorem{lemma}[thm]{Lemma}
\newtheorem{prob}[thm]{Problem}
\newcommand{\segend}[2]{\fill #1 circle (#2)}
\newcommand{\segendW}[2]{\fill[white] #1 circle (#2); \draw #1 circle (#2);}
\newcommand{\cartplane}[3]{ 
	\draw[gray,->] ($#1+(-0.5, 0)$) -- ($#1+(#2, 0)$);
	\draw[gray,->] ($#1+(0, -0.5)$) -- ($#1+(0, #3+0.5)$);
	\node[right, gray, font=\small] at ($#1+(#2, 0)$) {$x(t)$};
	\node[above, gray, font=\small] at ($#1+(0, #3+0.5)$) {$t$};
	\node[left, gray, font=\small] at (0, #3) {\footnotesize $1$};
}
\newcommand{\Real}{{\mathbb R}}
\title[Kinetic clustering of points on the line]{Kinetic clustering of points on the line}
\author[all]{Cristina G. Fernandes, Marcio T.I. Oshiro}
\dedicatory{\small {\rm Instituto de Matem\'atica e Estat\'{\i}stica \\ Universidade de S\~ao
Paulo, Brazil }}
\thanks{This research has been partially supported by Capes, CNPq
  Projects (Proc. 308523/2012-1), Fapesp Project
  (Proc. 2013/03447-6) and MaCLinC Project of Numec/USP, Brazil.\\
 Email:{\texttt{{\{cris|oshiro\}@ime.usp.br}}}}
\date{\today, \currenttime}
\begin{document}

\begin{abstract}
The problem of clustering a set of points moving on the line consists of the following: given positive integers $n$ and
$k$, the initial position and the velocity of $n$ points, find an optimal $k$-clustering of the points. We consider
two classical quality measures for the clustering: minimizing the sum of the clusters diameters and minimizing the
maximum diameter of a cluster. For the former, we present polynomial-time algorithms under some assumptions and, 
for the latter, a $(2.71 + \eps)$-approximation.
\end{abstract}


\maketitle

\onehalfspacing

\section{Introduction}
\label{sec:intro}

Clustering refers to a well-known class of problems whose goal is to partition a set so that ``similar'' elements are
placed in the same subset of the partition. The notion of similarity and the format of the partition depend on the
application. In this work, we study two clustering problems in a kinetic context, where points move continuously. 

Atallah~\cite{Atallah85} proposed a model for the points movement where the points are in a $d$-dimensional space
and each coordinate of each point is given by a polynomial on the time variable. Using his model, 
Har-Peled~\cite{Har-Peled04} showed how to apply a clustering algorithm for the static setting to find a competitive 
clustering of the moving points. His objective was to find $k$ centers that cover all the points with
minimum radius. When the polynomials describing the points movement have degree at most~$\mu$, his algorithm relaxes 
the restriction on the number of clusters, allowing at most $k^{\mu+1}$ clusters, in order to achieve a constant
approximation ratio with respect to the optimal radius of a $k$-clustering at any time.  

Another model, called KDS (kinetic data structure), is presented by Basch, Guibas, and 
Hershberger~\cite{BaschGuibasHershberger99}. In this model, there is no need to know the full description of the 
points movement, which can be updated online. Using this model, Gao, Guibas, Hershberger, Zhang, and Zhu~\cite{Gao+03} 
proposed a randomized constant approximation to maintain, as the points move, a clustering minimizing the number of 
discrete centers needed to cover all points within a fixed radius.

Lee, Han, and Whang~\cite{LeeHanWhang07} presented a framework for clustering of trajectories, defined as a sequence of
points in a multi-dimensional space describing the movement of an object. They reduced each trajectory to a set of
contiguous line segments and used heuristics to group similar resulting trajectories.

Our work is closer to the one of Har-Peleg, as we look for a static clustering of moving points, instead of a way to 
keep an (almost) optimal $k$-clustering all the time. We consider a restricted case, with points located in~$\Real$, 
linear movements ($\mu=1$), and two classical quality measures for the clustering: minimizing the sum of the clusters 
diameters and minimizing the maximum diameter of a cluster. For the former, we present a polynomial-time algorithm 
under some assumptions and, 
for the latter, a $((4+\sqrt{2})/2 + \eps)$-approximation 
for every~${\eps>0}$.

In Section~\ref{sec:kinetic-model}, we formalize our model for the movement of the points and give the definition of the
diameter of a cluster in our setting, to precisely state the two variants of clustering we address.  In
Section~\ref{sec:min-sum-areas}, we present the polynomial-time algorithm for the first variant and, in
Section~\ref{sec:max-area}, we present the approximation for the second variant and a related open problem.

\section{One dimensional kinetic model and the problems}
\label{sec:kinetic-model}

In our kinetic model, $n$ points move with uniform rectilinear velocity during a continuous time interval. 
Without loss of generality, the time interval is~$[0,1]$. Each point $i \in \aset{1,2,\ldots,n}$ 
has an initial position~$x_i(0)$ and its velocity is given by a vector $v_i$.
We only consider points in $\Real$, so the position and the velocity are real numbers. 
A positive/negative velocity indicates a movement to the right/left respectively. 
This is a particular case of the KDS~\cite{BaschGuibasHershberger99}
and Atallah's model~\cite{Atallah85}.

At an instant $t$ in~$[0,1]$, the position $x_i(t)$ of a 
point~$i$ with initial position~$x_i(0)$ and velocity~$v_i$ is given by
the function \[x_i(t) = x_i(0) + v_it.\] 
This function represents a segment on the Cartesian plane,
called the \defi{trajectory} of point~$i$, and is given by the pair $(x_i(0),v_i)$. 
We draw the Cartesian plane with the horizontal axis
representing the position~$x$ and the vertical axis representing the time $t$.
Since the time interval is always~$[0,1]$, the strip of the plane between~$t = 0$ 
and~$t = 1$ will be called \defi{\strip{}}. 

For our purpose, no two points have the same trajectory, or they can be treated as one. 
Hence, we assume a one-to-one relation between moving points and their trajectories, and
mostly refer to trajectories instead of moving points in what follows. 

Given a finite set $S$ of trajectories, a \defi{cluster} is a subset of $S$ and 
a \defi{$k$-clustering} is a partition of~$S$ into~$k$ clusters. 
Note that, as a cluster might be empty, every $k'$-clustering for $k'< k$ corresponds 
to a $k$-clustering by adding $k-k'$ empty clusters. Conversely, any $k$-clustering 
of $S$ with more than $|S|$ clusters may be converted into an $|S|$-clustering by 
disregarding some empty clusters. So we may assume that $1 \leq k \leq |S|$. 
The \defi{left side} of a nonempty cluster~$C$ is the piecewise linear function
$\min_{i \in C} x_i(t)$ for $t \in [0,1]$. Analogously, the \defi{right side} is 
$\max_{i \in C} x_i(t)$ for $t \in [0,1]$. 
The \defi{\cspan{}} of a cluster~$C$, $\cspan{C}$, is
empty if $C$ is empty, otherwise it is the region within the \strip{} bounded by the left
and right sides of~$C$. The \defi{diameter} of~$C$ is the area of its \cspan{}, denoted by $\diam(C)$. 
See Figure~\ref{fig:line-segments}.
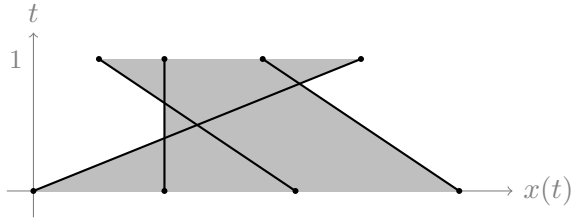
\begin{figure}[ht]
	\center
	\begin{tikzpicture}[scale=0.7]
		\cartplane{(0,0)}{9}{2.5}
		\coordinate (A) at (0, 0);
		\coordinate (B) at (6.15385, 2.5);
		\coordinate (C) at (2.46154, 0);
		\coordinate (D) at (2.46154, 2.5);
		\coordinate (E) at (4.92308, 0);
		\coordinate (F) at (1.23077, 2.5);
		\coordinate (G) at (8, 0);
		\coordinate (H) at (4.30769, 2.5);
		\fill[gray!50] (A) -- (intersection of A--B and C--D) -- 
			(intersection of C--D and E--F) -- (F) -- (B) -- (intersection of A--B and G--H) -- (G) -- cycle;
		\draw[thick] (A) -- (B);
		\draw[thick] (C) -- (D);
		\draw[thick] (E) -- (F);
		\draw[thick] (G) -- (H);
		\segend{(A)}{1.6pt}; \segend{(B)}{1.6pt};
		\segend{(C)}{1.6pt}; \segend{(D)}{1.6pt};
		\segend{(E)}{1.6pt}; \segend{(F)}{1.6pt};
		\segend{(G)}{1.6pt}; \segend{(H)}{1.6pt};
	\end{tikzpicture}
	\caption{\label{fig:line-segments}Representation of a set of moving points by their trajectories.
		The highlighted region is the \cspan{} of the set.}
\end{figure}

The reason to consider the diameter of a cluster as the area of its \cspan{} is because 
we deal with continuous time. 
Usually the diameter of the cluster is the distance between the farthest pair of points in it. 
So, for a continuous time interval, we integrate this distance, as shown 
by Equation~\eqref{eq:integral-area}, which corresponds to the area of the
cluster's span:
\begin{equation}\label{eq:integral-area}
	\int_0^1 \! \left(\max_{i \in C}x_i(t) - \min_{i \in C}x_i(t)\right) \mathrm{d}t \ = \ \area(\cspan{C}) = \diam(C).
\end{equation}

Note that one can calculate the diameter of a cluster $C$ in time polynomial on $|C|$.
If $|C| \leq 1$, then its diameter is 0. Otherwise, we can calculate the diameter of $C$ in 
polynomial time by first getting a
description of its \cspan{} and then using any known algorithm to calculate the area of polygons. 
There are algorithms for this that run in time linear on the number of vertices of the \cspan{}, 
which is linear on the number of trajectories~\cite{ORourke98}.  
The left and right sides of $C$ can be obtained by a divide-and-conquer algorithm that 
runs in~$\Oh(|C| \log |C|)$ time~\cite{Atallah85}.

Two problems are studied in the next sections. The
\defi{sum of diameters kinetic 1D $k$-clustering problem} (\kcsoma)
consists of, given a positive integer~$k$
and a finite set~$S$ of trajectories, finding a $k$-clustering of~$S$ 
whose sum of the clusters diameters is minimized.
The \defi{max diameter kinetic 1D $k$-clustering problem} (\kcmax)
consists of, given a positive integer~$k$
and a finite set~$S$ of trajectories, finding a $k$-clustering of~$S$ 
whose maximum cluster diameter is minimized.
The static 2-dimensional versions of these clustering problems are well-known NP-hard
problems. For both, a~2-approximation is known and is best possible unless P~$=$~NP. 
The static 1D versions of both are polynomially solvable~\cite{Brucker78}.

\section{Minimizing the sum of the diameters}
\label{sec:min-sum-areas}

Let $\clus$ be a $k$-clustering of a set of trajectories. We denote by
\[\sd(\clus) = \sum_{C \in \clus} \diam(C)\]
the sum of the diameters of the clusters in $\clus$. With this, we can formally state our 
first problem as follows.
\begin{prob}[\kcsoma]
	Given a finite set of trajectories $S$ and a positive integer~$k$, find 
	a $k$-clustering $\clus$ of $S$ such that $\sd(\clus)$ is minimum.
\end{prob}

Let $S$ be a finite set of $n$ trajectories. Let $\clus^*$ be an optimal $k$-clustering of $S$,
i.e., $\sd(\clus^*)$ is minimum among all $k$-clusterings of $S$. We denote by 
$\optsd(S,k) = \sd(\clus^*)$ the value of an optimal $k$-clustering, which is, in this case,
the sum of the diameters of the clusters in $\clus^*$.

For the next lemma, recall that we assumed that $k \leq n$. 

\begin{lemma}\label{lemma:opt-nonempty}
	Let $S$ be a set of $n$ trajectories. There is always an optimal
	$k$-clustering of $S$ for the \kcsoma{} without empty clusters.
\end{lemma}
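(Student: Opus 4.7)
The plan is to start with an optimal $k$-clustering $\clus^*$ and, as long as it contains an empty cluster, modify it into another optimal $k$-clustering with one fewer empty cluster. Iterating this will produce an optimal $k$-clustering with no empty clusters.

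First, I would argue that whenever $\clus^*$ has at least one empty cluster, some nonempty cluster in $\clus^*$ contains at least two trajectories. Indeed, if $j \geq 1$ of the clusters are empty, the remaining $k - j$ clusters partition all $n$ trajectories. Since $k \leq n$, we have $k - j \leq n - 1 < n$, so by pigeonhole at least one nonempty cluster contains two or more trajectories.

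Next, I would describe the swap: pick a nonempty cluster $C$ with $|C| \geq 2$ and an empty cluster $C'$, choose any trajectory $i \in C$, and form a new $k$-clustering $\clus'$ obtained from $\clus^*$ by replacing $C$ and $C'$ with $C \setminus \{i\}$ and $\{i\}$. The new cluster $\{i\}$ has diameter $0$, and the key observation is that $\diam(C \setminus \{i\}) \leq \diam(C)$: removing a trajectory can only raise the left side $\min_{j \in C} x_j(t)$ and lower the right side $\max_{j \in C} x_j(t)$, so the span shrinks (set-theoretically) and the area cannot grow. Consequently, $\sd(\clus') \leq \sd(\clus^*)$, and since $\clus^*$ is optimal, equality holds and $\clus'$ is also optimal.

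Each such swap strictly decreases the number of empty clusters while preserving optimality, so after finitely many iterations (at most $k-1$) one arrives at an optimal $k$-clustering without empty clusters. The only delicate point is the monotonicity $\diam(C \setminus \{i\}) \leq \diam(C)$, which I would justify directly from the definition of the span as the region between the pointwise minimum and maximum of the trajectories restricted to $[0,1]$, together with Equation~\eqref{eq:integral-area}. Everything else is a straightforward counting argument.
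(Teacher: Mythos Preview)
Your proposal is correct and follows essentially the same argument as the paper: both use pigeonhole to find a cluster with at least two trajectories, move one trajectory into an empty cluster, and appeal to monotonicity of the diameter under removal to preserve optimality. The only cosmetic differences are that the paper frames the argument as a minimal-counterexample contradiction rather than an explicit iteration, and it removes a trajectory specifically from the left side of~$C$ (whereas you remove an arbitrary one and justify the general monotonicity $\diam(C\setminus\{i\})\leq\diam(C)$ directly).
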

\begin{proof}
	Let $\clus^*$ be an optimal $k$-clustering of $S$ with minimal number
	of empty clusters. Suppose that~$\clus^*$ has at least one empty cluster.
	Let~$C$ be a non-empty	cluster of~$\clus^*$ with
	at least two trajectories. Such cluster must exist as $k \leq n$.

	Take some trajectory~$s$ from the left side of~$C$. Remove~$s$ from~$C$
	and replace an empty cluster by a cluster containing only~$s$. This
	does not increase the diameter of~$C$ and the new cluster has diameter
	zero. Thus, we obtained a $k$-clustering~$\clus'$ with $\sd(\clus') \leq \sd(\clus^*)$, 
        which means  $\sd(\clus') = \sd(\clus^*)$ because $\clus^*$ is optimal, 
	and with fewer empty clusters, contradicting the choice of~$\clus^*$.
	Hence, there is always an optimal $k$-clustering of $S$ for the 
	\kcsoma{} without empty clusters.
\end{proof}

By Lemma~\ref{lemma:opt-nonempty}, we may always look for an optimal
$k$-clustering without empty clusters. However, $k$-clusterings with
empty clusters can be used to bound the value of~$\optsd(S,k)$.

For any integer $k$ with $1 \leq k \leq n$, the number of
distinct $k$-clusterings of~$S$ without empty clusters is given by the Stirling number of the second kind,
denoted by~$\Snk{n}{k}$.  It is known~\cite{RennieDobson69} that  
 \[ \Snk{n}{k} \ = \ \frac{1}{k!} \sum_{i=0}^{k}(-1)^{k-i}\binom{k}{i}i^n \ \geq \ \frac{1}{2}(k^2 + k + 2)k^{n-k-1}-1 \ = \ \Omega(k^{n-k+1}).\] 
For $k \geq 2$, this number is exponential on the number of trajectories even if $k$ is a fixed value.
So, it would take too long to examine every $k$-clustering to find an optimal one, for~$k \geq 2$.

The trajectories in $S$ divide the \strip{} into convex polygonal regions.
We call each of these regions a \defi{hole} (of $S$).
Note that a hole always has a positive area, thus its interior is never empty. 
The set of holes of $S$ is denoted by $H(S)$, or simply $H$ if $S$ is
clear from the context.

Let $s$ be a trajectory in~$S$ and~$h$ be a hole. 
We say that~$s$ is \defi{to the left} of~$h$ if there exists a point~$(x',t')$ inside~$h$
such that~$x_s(t') < x'$. Otherwise,~$s$ is \defi{to the right} of~$h$. Intuitively, we
trace a horizontal line passing through~$h$. Then,~$s$ is to the left of~$h$ if, in this
line, the point intersecting~$s$ is to the left of some point of~$h$. 
See Figure~\ref{fig:hole-separating}. 
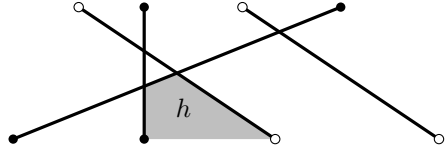
\begin{figure}[ht]
	\center
	\begin{tikzpicture}[scale=0.7]
		\coordinate (A) at (0, 0);
		\coordinate (B) at (6.15385, 2.5);
		\coordinate (C) at (2.46154, 0);
		\coordinate (D) at (2.46154, 2.5);
		\coordinate (E) at (4.92308, 0);
		\coordinate (F) at (1.23077, 2.5);
		\coordinate (G) at (8, 0);
		\coordinate (H) at (4.30769, 2.5);
		\fill[gray!50] (C) -- (intersection of A--B and C--D) -- 
			(intersection of A--B and E--F) -- (E) -- cycle;
		\draw[very thick] (A) -- (B); \segend{(A)}{2.5pt}; \segend{(B)}{2.5pt};
		\draw[very thick] (C) -- (D); \segend{(C)}{2.5pt}; \segend{(D)}{2.5pt};
		\draw[very thick] (E) -- (F); \segendW{(E)}{2.5pt}; \segendW{(F)}{2.5pt};
		\draw[very thick] (G) -- (H); \segendW{(G)}{2.5pt}; \segendW{(H)}{2.5pt};
		\node at (3.2,0.6) {\footnotesize $h$};
		\end{tikzpicture} 
	\caption{\label{fig:hole-separating}Trajectories with black ends
	are to the left of $h$ and trajectories with white ends
	are to the right.}
\end{figure}

Note that each trajectory is either to the left or to the right of $h$.
Thus, each hole $h$ partitions $S$ into two parts $S_r(h)$ and
$S_{\ell}(h)$, with the trajectories in $S$ to the right and to the left of~$h$ respectively. 
Given a $k$-clustering $\clus$ of~$S$, a hole that is contained in the \cspan{} of some cluster in $\clus$ 
is \defi{covered} by $\clus$. Otherwise it is \defi{uncovered} by $\clus$. 
A hole~$h$ \defi{separates} distinct clusters~$C_1$ and~$C_2$ in~$\clus$
if $C_1 \subseteq S_{\ell}(h)$ and $C_2 \subseteq S_r(h)$, 
or $C_1 \subseteq S_{r}(h)$ and~$C_2 \subseteq S_{\ell}(h)$. The reference to the $k$-clustering
is omitted when it is clear.

Since we only consider finite sets of trajectories, the number of holes in
these sets is not only finite, but polynomial on the number of trajectories.
This is a straightforward consequence, 
when we extend the trajectories to lines, of the fact that a set of~$n$
lines divides the plane in at most~${n(n+1)/2 + 1}$ regions. Also, for any
finite set of trajectories, there are always two \defi{unbounded} holes, 
$h_{-}$ and $h_{+}$, whose region extends infinitely to the left and to the
right, respectively. The other holes are called \defi{bounded}.

\begin{lemma}\label{lemma:sum-separation}
	Let $S$ be a set of trajectories and let $\clus^*$ be an optimal $k$-clustering of $S$.
	For any two distinct clusters of $\clus^*$, there is a hole of~$S$ separating them.
\end{lemma}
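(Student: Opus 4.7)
The plan is a proof by contradiction. Assume there exist distinct clusters $C_1, C_2 \in \clus^*$ not separated by any hole of $S$, and set $D = C_1 \cup C_2$. By Lemma~\ref{lemma:opt-nonempty} I may assume $\clus^*$ has no empty clusters, so $|C_1|, |C_2| \geq 1$ and $|D| \geq 2$. Write $L_i(t), R_i(t)$ for the left and right sides of $C_i$. I will construct a $k$-clustering $\clus'$ with $\sd(\clus') < \sd(\clus^*)$.

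The main obstacle is translating ``no separating hole'' into a geometric statement about spans. I claim that the intervals $[L_1(t), R_1(t)]$ and $[L_2(t), R_2(t)]$ overlap at every $t \in [0,1]$. Suppose not; then, by symmetry between the two clusters, I may assume the open region $M = \{(x,t) : R_1(t) < x < L_2(t),\ t \in [0,1]\}$ has positive area. Pick $p = (x,t)$ in the interior of $M$ not on any trajectory of $S$ (possible since $S$ is finite); then $p$ lies in some hole $h \in H(S)$. I claim every interior point of $h$ lies in $M$: otherwise some interior point $q = (x', t')$ of $h$ satisfies $x' \leq R_1(t')$ (the case $x' \geq L_2(t')$ is symmetric, using a trajectory in $C_2$). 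Choose $s^* \in C_1$ with $s^*(t') = R_1(t')$, and a continuous path $(x(\tau), t(\tau))$ in the interior of $h$ from $p$ to $q$ (available since $h$ is convex). The map $\tau \mapsto x(\tau) - s^*(t(\tau))$ is positive at $\tau = 0$, because $x > R_1(t) \geq s^*(t)$, and nonpositive at $\tau = 1$, because $x' \leq s^*(t')$; the intermediate value theorem thus places a point of the path on the trajectory $s^*$, contradicting that the interior of $h$ contains no trajectory. Hence every interior point of $h$ lies in $M$, so for each such point $(x,t)$ every trajectory $s \in C_1$ satisfies $x_s(t) \leq R_1(t) < x$ and thus is to the left of $h$, and symmetrically every $s \in C_2$ is to the right; so $h$ separates $C_1$ from $C_2$, contradicting our assumption.

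With the overlap established, the identity
\[
\diam(C_1) + \diam(C_2) - \diam(D) \;=\; \int_0^1 \bigl(\min(R_1,R_2)(t) - \max(L_1,L_2)(t)\bigr)\,\mathrm{d}t \;\geq\; 0
\]
follows directly from the definitions, since overlap at every $t$ means the integrand is nonnegative. I then choose $s \in D$ that is strictly the leftmost trajectory of $D$ on an initial interval $[0, \eps]$ of positive length (take $s \in D$ minimizing $x_s(0)$, breaking ties by smallest $v_s$; then for all other $s' \in D$, $s'(t) > s(t)$ for sufficiently small $t$). After removing $s$, the left envelope of $D \setminus \{s\}$ is strictly above that of $D$ on $[0, \eps]$, giving $\diam(D \setminus \{s\}) < \diam(D)$.

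Finally, set $\clus' = (\clus^* \setminus \{C_1, C_2\}) \cup \{D \setminus \{s\},\ \{s\}\}$, still a $k$-clustering of $S$. Then
\[
\sd(\clus^*) - \sd(\clus') \;=\; \diam(C_1) + \diam(C_2) - \diam(D \setminus \{s\}) \;>\; \diam(C_1) + \diam(C_2) - \diam(D) \;\geq\; 0,
\]
contradicting the optimality of $\clus^*$. This single construction uniformly handles both the generic case (when $\diam(D) < \diam(C_1) + \diam(C_2)$ already) and degenerate cases (for instance, a singleton cluster whose trajectory is sandwiched inside the other's span, where the overlap integral is zero), because the strict drop $\diam(D \setminus \{s\}) < \diam(D)$ is always available once $|D| \geq 2$.
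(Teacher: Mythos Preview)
Your proof is correct and follows the same overall contradiction strategy as the paper (merge the two non-separated clusters), but it is more careful in two respects. First, you give an explicit argument---via a point in the gap region $M$, the hole containing it, and an intermediate-value crossing---for the implication ``no separating hole $\Rightarrow$ the spans overlap at every time $t$,'' which the paper simply asserts. Second, and more substantively, the paper concludes directly that $\diam(C_1)+\diam(C_2) > \diam(C_1\cup C_2)$ strictly, whereas you only claim the weak inequality and then recover strictness by splitting off the leftmost trajectory $s$ of $D$. Your extra step is not superfluous: when, say, $C_2$ is a singleton whose trajectory lies entirely inside $\cspan{C_1}$, the spans overlap at every $t$ but the overlap integral is zero and merging gives only equality, so the paper's strict inequality fails as stated; your split-off trick still produces a strictly better $k$-clustering. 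One small quibble: invoking Lemma~\ref{lemma:opt-nonempty} to assume $\clus^*$ has no empty clusters is not quite licit, since that lemma only guarantees \emph{some} optimal clustering is nonempty, not the given one. You don't actually need it: if either $C_1$ or $C_2$ is empty then the unbounded hole $h_{-}$ trivially separates them, so you may simply assume both are nonempty and proceed.
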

\begin{proof}
	Suppose there are clusters $C_1$ and $C_2$ in $\clus^*$ such that there is no hole of~$S$
	separating them. This means that $\cspan{C_1} \cap \cspan{C_2} \neq \emptyset$
	and, consequently, ${\cspan{C_1} \cup \cspan{C_2} = \cspan{C_1 \cup C_2}}$,
	otherwise there would be a hole of~$S$ separating~$C_1$ and~$C_2$. 
	Hence,
	$\diam(C_1) + \diam(C_2) > \diam(C_1 \cup C_2)$. So, merging~$C_1$ and~$C_2$ 
	would result in a better $k$-clustering. 
	Therefore, any two distinct clusters in~$\clus^*$ must be separated by a hole of~$S$.
\end{proof}

Notice that Lemma~\ref{lemma:sum-separation} does not guarantee that there is always an optimal
$k$-clustering with $k-1$ uncovered holes of $S$, since all holes separating two clusters could be
covered by other clusters, as shown in Figure~\ref{fig:bad-separated}.
However, we can guarantee the existence of at least one 
uncovered hole in an optimal $k$-clustering, for $k > 1$.

\begin{figure}[ht]
	\begin{center}
		\begin{tikzpicture}[scale=0.5]
			\coordinate (A0) at (0.5, 0);
			\coordinate (A1) at (0.5, 4);
			\coordinate (B0) at (1.3, 0);
			\coordinate (B1) at (5, 4);
			\coordinate (C0) at (3, 0);
			\coordinate (C1) at (4, 4);
			\coordinate (D0) at (5, 0);
			\coordinate (D1) at (2, 4);
			\coordinate (E0) at (6, 0);
			\coordinate (E1) at (6, 4);

			\fill[gray!80, opacity=0.5] (A0) -- (C0) -- (C1) -- (A1) -- cycle;
			\fill[black!80, opacity=0.5] (D0) -- (E0) -- (E1) -- (D1) -- cycle;
			\segend{(A0)}{2.2pt}; \segend{(A1)}{2.2pt};
			\segend{(B0)}{2.2pt}; \segend{(B1)}{2.2pt};
			\segend{(C0)}{2.2pt}; \segend{(C1)}{2.2pt};
			\segend{(D0)}{2.2pt}; \segend{(D1)}{2.2pt};
			\segend{(E0)}{2.2pt}; \segend{(E1)}{2.2pt};
			\draw[thin] (A0) -- (A1) -- (C1) -- (C0) -- cycle;
			\draw[very thick] (B0) -- (B1);
			\draw[dashed, thin] (D0) -- (D1) -- (E1) -- (E0) -- cycle;
		\end{tikzpicture}
	\end{center}
	\caption{\label{fig:bad-separated}A not \goodClu{} $3$-clustering.}
\end{figure}

Let $S$ be a finite set of~$n$ trajectories. We say that a trajectory~$s$ in~$S$ is
\defi{\bleftmost{}} if~$x_s(0)$ is minimum in~$S$ and, in case of ties, 
$x_s(1)$ is minimum within the tied trajectories.

\begin{lemma}\label{lemma:sum-separation-uncovered}
	Let $S$ be a set of trajectories and let $\clus^*$ be an optimal $k$-clustering of $S$.
	For $k > 1$, there is at least one uncovered hole of~$S$.
\end{lemma}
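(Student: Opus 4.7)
The plan is to find an explicit bounded uncovered hole by ordering the trajectories at time $t=0$ and leveraging the optimality of $\clus^*$ through a contiguity argument.

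First, I would prove that in any optimal $\clus^*$ the clusters form contiguous blocks under the total order $\prec$ on $S$ that sorts trajectories by $x_{\cdot}(0)$ with ties broken by $x_{\cdot}(1)$ (the same order used to define the \bleftmost{} trajectory). Suppose not: then some cluster $A$ contains trajectories $a_1,a_2$ lying on either side of a trajectory $b$ of a different cluster $B$ in the $\prec$-order, so $x_b(0) \in [\min_{i\in A} x_i(0),\max_{i\in A} x_i(0)]$ and the point $(x_b(0),0)$ belongs to $\cspan{A}\cap\cspan{B}$. By the argument inside the proof of Lemma~\ref{lemma:sum-separation} (intersecting spans must in fact merge into $\cspan{A\cup B}$, strictly reducing the total diameter), merging $A$ and $B$ and restoring the cluster count with an empty cluster (Lemma~\ref{lemma:opt-nonempty}) gives a cheaper $k$-clustering, contradicting optimality.

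Next, because $k>1$ and by Lemma~\ref{lemma:opt-nonempty} every cluster can be assumed non-empty, in the $\prec$-order there are two consecutive trajectories $s\prec s'$ lying in distinct clusters $C\neq C'$. The region immediately between them --- an open interval at $t=0$ when $x_s(0)<x_{s'}(0)$, and a wedge opening for $t>0$ in the tied case $x_s(0)=x_{s'}(0)$ (the tie-breaking rule forces $x_s(1)<x_{s'}(1)$ and so $x_s(t)<x_{s'}(t)$ for all $t>0$) --- is contained in a single hole $h$ of $S$, which is bounded because $s$ separates it from $h_-$ and $s'$ separates it from $h_+$.

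Finally, I would show $h$ is uncovered. If some $X\in\clus^*$ had $h\subseteq\cspan{X}$, then for each small $t>0$ one would need $\min_{i\in X} x_i(t)\leq x_s(t)$ and $\max_{i\in X} x_i(t)\geq x_{s'}(t)$; letting $t\to 0^+$ and using the tie-breaking rule, the $\prec$-smallest element of $X$ lies at or before $s$ and the $\prec$-largest at or after $s'$. By contiguity of $X$, both $s$ and $s'$ would belong to $X$, contradicting $s\in C\neq C'\ni s'$. The delicate step is the tied case $x_s(0)=x_{s'}(0)$: there the gap collapses to a single point at $t=0$, and I have to use the tie-breaking rule carefully both to guarantee that no other trajectory slips into the gap for small $t>0$ (so the wedge really sits in a single hole $h$) and to transfer the one-sided limit inequalities at $t=0^+$ back into the $\prec$-order that underlies the contiguity argument.
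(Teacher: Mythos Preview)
Your contiguity claim in step~1 is false, and this breaks the whole argument. Take three trajectories: $a_1$ constant at $0$, $a_2$ constant at $2$, and $b$ going from $1$ at $t=0$ to $100$ at $t=1$. In the $\prec$-order we have $a_1 \prec b \prec a_2$. The $2$-clustering $A=\{a_1,a_2\}$, $B=\{b\}$ has $\sd = \diam(A)+\diam(B) = 2+0 = 2$, whereas both alternatives $\{\{a_1\},\{a_2,b\}\}$ and $\{\{a_1,b\},\{a_2\}\}$ cost roughly $48.5$ and $50.5$ respectively; so this non-contiguous clustering is optimal. The step where you invoke ``the argument inside the proof of Lemma~\ref{lemma:sum-separation}'' is exactly where the error lies: that lemma shows \emph{no separating hole $\Rightarrow$ spans overlap with positive area $\Rightarrow$ merging helps}. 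You are using the unjustified converse, inferring from a mere boundary intersection at $(x_b(0),0)$ that $\cspan{A}\cup\cspan{B}=\cspan{A\cup B}$. In the example, $\cspan{A}\cap\cspan{B}$ is a measure-zero segment and $\diam(A\cup B)\approx 50.5 \gg \diam(A)+\diam(B)$, so merging is disastrous; indeed there \emph{is} a hole of $S$ separating $A$ from $B$ (the one between $a_2$ and $b$ for $t>1/99$), consistent with Lemma~\ref{lemma:sum-separation}.

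The paper's proof avoids all of this with a short area-counting contradiction: if every bounded hole were covered, then $\sd(\clus^*)\geq\sum_{h}\area(h)=\diam(S)$, but the $k$-clustering $\{\{s\},\,S\setminus\{s\},\,\emptyset,\ldots\}$ with $s$ leftmost already costs $\diam(S\setminus\{s\})\leq\diam(S)-\area(h_s)<\diam(S)$, contradicting optimality. No structural property like contiguity is needed.
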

\begin{proof}
	Suppose that every hole of $S$ is covered by $\clus^*$. Then,
	\[\sd(\clus^*) \ = \ \sum_{C \in \clus^*} \diam(C) \ \geq \ \sum_{h \in H} \area(h).\]
	Let $s$ be the \bleftmost{} trajectory of $S$. There is a hole $h_s$ in~$H$ separating $\aset{s}$
	and~$S \setminus \aset{s}$. Consider the $k$-clustering~$\clus$ with the clusters $\aset{s}$, 
	$S \setminus \aset{s}$, and the remaining clusters empty. Then,
	\[\sd(\clus) \ = \ \sum_{h \in H} \area(h) - \area(h_s) \ < \ \sum_{h \in H} \area(h) \ \leq \ \sd(\clus^*).\]
	This is a contradiction, since $\clus^*$ is an optimal $k$-clustering. Therefore, there is at least one
	hole uncovered by $\clus^*$.
\end{proof}

Lemmas~\ref{lemma:sum-separation} and~\ref{lemma:sum-separation-uncovered} state some properties
of an optimal $k$-clustering. So, we do not need to consider every possible $k$-clustering
in order to find an optimal one. Considering
only the $k$-clusterings with such properties is enough. 
To do that, we define a \defi{\goodSeq{}} for~$S$
as a sequence $\aset{(h_i,C_i)}_{i=1}^{k-1}$ such that, for $1 \leq i \leq k-1$ 
and~$\clus_0 = \aset{S}$, we have 
\begin{align}
	C_i & \in \clus_{i-1} \nonumber \\
	h_i & \in H \mbox{ contained in }\cspan{C_i} \label{eq:sequence-hole-cluster}\\
	\clus_i & = (\clus_{i-1} \setminus \aset{C_i}) \cup \aset{(C_i)_{\ell}(h_i),(C_i)_{r}(h_i)}.\nonumber
\end{align}
For $i \geq 1$, each $\clus_i$ is an $(i+1)$-clustering obtained by separating the
cluster~$C_i$ of~$\clus_{i-1}$ using the hole~$h_i$. 
Hence, a \goodSeq{} defines a $k$-clustering.

\begin{thm}\label{thm:sum-optimal-sequence}
	Let $S$ be a set of trajectories and~$k \geq 1$. 
	Every optimal $k$-clustering of~$S$
	is defined by a \goodSeq{}.
\end{thm}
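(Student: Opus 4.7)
The plan is to construct the \goodSeq{} $\{(h_i, C_i)\}_{i=1}^{k-1}$ incrementally, starting from $\clus_0 = \{S\}$ and performing $k-1$ splits to arrive at $\clus^*$. (The case $k=1$ is trivial: the empty sequence works since $\clus_0 = \{S\} = \clus^*$.) Throughout the construction I would maintain the invariant that $\clus_i$ is a coarsening of $\clus^*$, meaning every cluster of $\clus^*$ is entirely contained in some cluster of $\clus_i$. At step $i$, since $|\clus_{i-1}| = i < k = |\clus^*|$, some $C_i \in \clus_{i-1}$ must contain at least two clusters of $\clus^*$; my task is to find a hole $h_i \in H(S)$ inside $\cspan{C_i}$ whose split of $C_i$ is consistent with the refinement towards $\clus^*$.

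The crux is the following sub-claim: for any $C \in \clus_{i-1}$ whose restriction $\clus^*|_C = \{D \in \clus^* : D \subseteq C\}$ has at least two clusters, there exists a hole $h \in H(S)$ contained in $\cspan{C}$ but not contained in $\cspan{D}$ for any $D \in \clus^*|_C$. Granted such an $h$, the split of $C$ along $h$ is valid: each $D \in \clus^*|_C$ has all its trajectories on one side of $h$, because otherwise $h \subseteq \cspan{D}$ would contradict the choice of $h$; and since $h \subseteq \cspan{C}$ forces $C$ to have trajectories on both sides of $h$, at least one such $D$ lies on each side. Thus $(C)_\ell(h)$ and $(C)_r(h)$ are non-empty unions of clusters of $\clus^*$, and $\clus_i$ still coarsens $\clus^*$.

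To prove the sub-claim, I would first observe that $\clus^*|_C$ is optimal for $C$ with $|\clus^*|_C|$ clusters, since any strict improvement could replace $\clus^*|_C$ inside $\clus^*$ and contradict the optimality of $\clus^*$. Applying Lemma~\ref{lemma:sum-separation-uncovered} to $C$ with clustering $\clus^*|_C$ yields a hole $h_C \in H(C)$ uncovered by $\clus^*|_C$. Because every trajectory bounding any $\cspan{D}$ (for $D \in \clus^*|_C$) lies in $C$, it cannot cross the interior of $h_C$; hence $h_C$ is either entirely inside or entirely outside $\cspan{D}$, and being uncovered forces it to be outside each $\cspan{D}$. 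Trajectories of $S \setminus C$ may subdivide $h_C$ when passing from $H(C)$ to $H(S)$, but any resulting sub-hole $h \in H(S)$ with $h \subseteq h_C$ still lies in $\cspan{C}$ and remains disjoint from each $\cspan{D}$; such an $h$ is the hole required by the sub-claim.

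The main obstacle is precisely this passage from $H(C)$ to $H(S)$: the definition of a \goodSeq{} demands that splitting holes come from $H(S)$, yet Lemma~\ref{lemma:sum-separation-uncovered} naturally produces a hole in $H(C)$, and trajectories of $S \setminus C$ can in principle cut the uncovered region of $\cspan{C}$ into many smaller pieces. The key observation is that every such piece inherits being outside every $\cspan{D}$, because $h_C$ was. Iterating the sub-claim $k-1$ times produces the sequence, and since $\clus_{k-1}$ coarsens $\clus^*$ with $|\clus_{k-1}| = k = |\clus^*|$, the two clusterings must coincide, so the constructed sequence defines $\clus^*$.
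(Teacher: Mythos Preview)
Your proposal is correct and follows essentially the same approach as the paper. The paper argues by induction on~$k$: it applies Lemma~\ref{lemma:sum-separation-uncovered} to~$S$ to obtain an uncovered hole~$h_1$, splits~$\clus^*$ across~$h_1$ into two optimal sub-clusterings, recurses on each half, and then observes that a hole of a subset that is not already a hole of~$S$ can be replaced by any hole of~$S$ contained in it. Your iterative refinement with the coarsening invariant is just the unrolled version of this recursion: the optimality of~$\clus^*|_C$, the appeal to Lemma~\ref{lemma:sum-separation-uncovered} on the subset, and the passage from~$H(C)$ to~$H(S)$ via sub-holes are exactly the three ingredients the paper uses, and your treatment of the last point is in fact slightly more explicit than the paper's.
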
 
\begin{proof}
	The proof is by induction on $k$. For $k = 1$, the statement is trivial, since
	$\clus_0 = \aset{S}$ is the only $1$-clustering of~$S$. So, consider $k > 1$
	and let $\clus^*$ be an optimal $k$-clustering of~$S$.

	By Lemma~\ref{lemma:sum-separation-uncovered}, there is an uncovered
	hole~$h_1$ in~$H$.
	Then,~$h_1$ separates~$S$ into~$S_1 = S_{\ell}(h_1)$ and~$\bar{S}_1 = S_r(h_1)$,
	so that $C \subseteq S_1$ or~$C \subseteq \bar{S}_1$, for every $C \in \clus^*$. 
	Let $\dlus = \aset{C \in \clus^* \st C \subseteq S_1}$, $k_1 = |\dlus|$, 
	$\bar{\dlus} = \aset{C \in \clus^* \st C \subseteq \bar{S}_1}$, and $\bar{k}_1 = |\bar{\dlus}|$.
	Note that $k_1 \geq 1$, $\bar{k}_1 \geq 1$, and $k_1 + \bar{k}_1 = k$.
	Moreover,~$\dlus$ is an optimal $k_1$-clustering of $S_1$ and $\bar{\dlus}$ is an
	optimal $\bar{k}_1$-clustering of $\bar{S}_1$, otherwise the $k$-clustering 
        $\clus^* = \dlus \cup \bar{\dlus}$ would not be optimal for $S$.

	By induction hypothesis, there is a \kgoodSeq{$k_1$}~$\aset{(d_i,C_i)}_{i=1}^{k_1-1}$
	defining~$\dlus$ and there is a \kgoodSeq{$\bar{k}_1$} $\aset{(\bar{d}_i,\bar{C}_i)}_{i=1}^{\bar{k}_1-1}$
	defining $\bar{\dlus}$. Note that $d_i$ is a hole of $S_1$. If~$d_i$ is not also a hole of~$S$,
	then~$d_i$ corresponds to the union of two or more holes of $S$. Any one of these holes of
	$S$ inside $d_i$ separates $S_1$ in the same way as $d_i$. Thus, we can exchange $d_i$ by one of the
	holes of $S$ inside it. The same can be done to each hole $\bar{d}_i$ of $\bar{S}_1$.

	Therefore, $(h_1,S),(d_1,C_1),\ldots,(d_{k_1-1},C_{k_1-1}),(\bar{d}_1,\bar{C}_1),
	\ldots,(\bar{d}_{\bar{k}_1-1},\bar{C}_{\bar{k}_1-1})$ is a \goodSeq{} that defines $\clus^*$.
\end{proof}

\begin{lemma}\label{lemma:number-sequences-hole-cluster}
	Let $S$ be a set of $n$ trajectories and~$k \geq 1$.
	The number of \goodSeq{}s of~$S$
	is $\Oh(n^{2(k-1)}(k-1)!)$.
\end{lemma}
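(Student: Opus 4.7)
The plan is to count $k$-good sequences by a direct product over the $k-1$ steps of the splitting process, using a bound on the number of holes of $S$.

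First, I would establish the ingredient needed for each factor: a bound on the number of holes. The paper already observed (just before Lemma~\ref{lemma:sum-separation}) that, when the $n$ trajectories are extended to full lines, they divide the plane into at most $n(n+1)/2 + 1$ regions, and hence $|H(S)| = \Oh(n^2)$. Any hole $h_i$ used in a good sequence must in particular belong to $H(S)$, so this bounds the number of candidate holes at every step independently of which cluster is being split.

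Next, I would count by stepping through $i = 1, 2, \ldots, k-1$. By the definition of a good sequence, the clustering $\clus_{i-1}$ has exactly $i$ clusters (since $\clus_0 = \aset{S}$ and each step replaces one cluster by two), so the choice of $C_i \in \clus_{i-1}$ contributes a factor of $i$. Once $C_i$ is fixed, the hole $h_i$ is required by~\eqref{eq:sequence-hole-cluster} to lie in $\cspan{C_i}$, and in particular $h_i \in H(S)$, giving at most $\Oh(n^2)$ choices for $h_i$. Multiplying the per-step bounds,
\[
\prod_{i=1}^{k-1} i \cdot \Oh(n^2) \ = \ (k-1)! \cdot \Oh\bigl(n^{2(k-1)}\bigr) \ = \ \Oh\bigl(n^{2(k-1)}(k-1)!\bigr),
\]
which is the desired estimate.

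There is essentially no hard step here: the argument is a routine product bound, and the only mild subtlety is making sure $C_i$ is well-defined at each stage, which is immediate from the inductive definition of $\clus_i$ in~\eqref{eq:sequence-hole-cluster}. I would also note that the bound is an overcount, since many pairs $(h_i, C_i)$ with $h_i \not\subseteq \cspan{C_i}$ are counted but do not correspond to actual good sequences; this is harmless for the asymptotic statement.
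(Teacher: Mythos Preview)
Your proposal is correct and essentially identical to the paper's own proof: both bound $|H| = \Oh(n^2)$, note that $\clus_{i-1}$ has $i$ clusters so $C_i$ has at most $i$ choices, bound the choices for $h_i$ by $|H|$, and take the product. The only cosmetic difference is that the paper uses $|H|-i+1$ instead of $|H|$ for the hole count at step $i$ (since holes are not repeated), but this does not affect the asymptotic bound.
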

\begin{proof}
	Remember that $|H| = \Oh(n^2)$.
	For each element $(h_i, C_i)$ of a \goodSeq{}, there are at most $|H| - i + 1$
	choices for hole $h_i$, and at most~${|\clus_{i-1}| = i}$ choices for cluster $C_i$.
	Thus, the number of such sequences is at most
	\[\prod_{i = 1}^{k-1} \left(|H| - i + 1\right)i 
			\ = \ \Oh\!\!\left(\prod_{i = 1}^{k-1} n^2i \right) 
			\ = \ \Oh\!\left(n^{2(k-1)}(k-1)!\right).\qedhere\]
\end{proof}

So, instead of considering all the $\Snk{n}{k} = \Omega(k^{n-k})$ possible
$k$-clusterings of $S$, we may consider only those that are defined by
a \goodSeq{}. Lemma~\ref{lemma:number-sequences-hole-cluster} guarantees
that the number of such sequences if polynomial in the input size for a
fixed value of~$k$. Moreover, since the diameter of a cluster with
$m$ trajectories can be calculated in time $\Oh(m \log m)$, an
algorithm that examines every $k$-clustering defined by a \goodSeq{}
runs in time $\Oh((k-1)!\ n^{2k-1} \log n)$.

\subsection{Well-separated clusterings}\label{sec:well-separated-sum}
Depending on the application, we may want the similarity between
distinct clusters to be as small as possible. For example, a cluster whose \cspan{} 
is contained in the \cspan{} of another one may be undesirable even if this configuration
was the only one to minimize the objective function.
We say that a $k$-clustering is \defi{\goodClu{}} if its clusters are pairwise separated by an uncovered hole.
Figure~\ref{fig:bad-separated} shows an example of a $3$-clustering that is not \goodClu{}, 
because there is no uncovered hole separating the bold
trajectory from the light gray cluster.

If we require the holes $b_i$ of a \goodSeq{} to be all uncovered, then the algorithm 
described in Lemma~\ref{lemma:number-sequences-hole-cluster} would find an optimal 
\goodClu{} $k$-clustering for \kcsoma{}.  However, we can do better using a dynamic 
programming algorithm that is polynomial in the input size, even if $k$ is part of the input. 

Let ${\dc_H = \cup_{h \in H} \aset{S_{\ell}(h),S_r(h)}}$.
Consider the partial order $\preceq$ over $\dc_H$ defined as follows:
for every $C_1$ and~$C_2$ in~$\dc_H$, we say $C_1 \preceq C_2$ 
if and only if $C_1 \subseteq C_2$.
We denote by $D_H$ the directed acyclic graph (dag)
representing this partial order. See Figure~\ref{fig:partial-order-dag}.
Notice that $|V(D_H)| = |\dc_H| \leq 2|H|$.
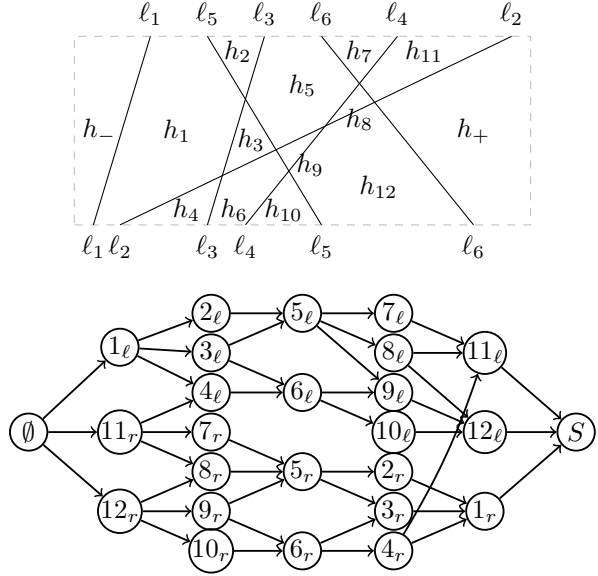
\begin{figure}[ht]
	\begin{subfigure}[]{\textwidth}
		\center
		\begin{tikzpicture}[scale=0.5]
			\coordinate (A1) at (0, 0);
			\coordinate (A2) at (1.5, 5);
			\coordinate (B1) at (0.7, 0);
			\coordinate (B2) at (11, 5);
			\coordinate (C1) at (3,0);
			\coordinate (C2) at (4.5, 5);
			\coordinate (D1) at (4, 0);
			\coordinate (D2) at (8, 5);
			\coordinate (E1) at (6, 0);
			\coordinate (E2) at (3, 5);
			\coordinate (F1) at (10, 0);
			\coordinate (F2) at (6, 5);

			\node[below] at (A1) {\footnotesize $\ell_1$};
			\node[below] at (B1) {\footnotesize $\ell_2$};
			\node[below] at (C1) {\footnotesize $\ell_3$}; 
			\node[below] at (D1) {\footnotesize $\ell_4$};
			\node[below] at (E1) {\footnotesize $\ell_5$};
			\node[below] at (F1) {\footnotesize $\ell_6$};
			\node[above] at (A2) {\footnotesize $\ell_1$};
			\node[above] at (B2) {\footnotesize $\ell_2$};
			\node[above] at (C2) {\footnotesize $\ell_3$};
			\node[above] at (D2) {\footnotesize $\ell_4$};
			\node[above] at (E2) {\footnotesize $\ell_5$};
			\node[above] at (F2) {\footnotesize $\ell_6$};
			
			\draw[gray!50, dashed] (-0.5,0) rectangle (11.5,5);
			\draw (A1) -- (A2);
			\draw (B1) -- (B2);
			\draw (C1) -- (C2);
			\draw (D1) -- (D2);
			\draw (E1) -- (E2);
			\draw (F1) -- (F2);

			\node[] at (0.15, 2.5) {\footnotesize$h_{-}$}; 
			\node[] at (2.2, 2.5) {\footnotesize$h_1$}; 
			\node[] at (3.8, 4.6) {\footnotesize$h_2$}; 
			\node[] at (4.15, 2.25) {\footnotesize$h_3$}; 
			\node[] at (2.45, 0.4) {\footnotesize$h_4$}; 
			\node[] at (5.5, 3.7) {\footnotesize$h_5$}; 
			\node[] at (3.7, 0.4) {\footnotesize$h_6$}; 
			\node[] at (7, 4.6) {\footnotesize$h_7$}; 
			\node[] at (7, 2.8) {\footnotesize$h_8$}; 
			\node[] at (5.7, 1.6) {\footnotesize$h_9$}; 
			\node[] at (5, 0.4) {\footnotesize$h_{10}$}; 
			\node[] at (8.7, 4.6) {\footnotesize$h_{11}$}; 
			\node[] at (7.5, 1) {\footnotesize$h_{12}$}; 
			\node[] at (10, 2.5) {\footnotesize$h_{+}$};
		\end{tikzpicture}
	\end{subfigure}
	
	\vspace*{0.3cm}
	\begin{subfigure}[]{\textwidth}
		\center
		\begin{tikzpicture}[scale=0.3, minimum size=14, inner sep=0pt, line width=0.7pt]
			\node[draw, circle] (h13) at (24,0) {\footnotesize$S$};
			
			\node[draw, circle] (l11) at (20,3.5) {\footnotesize$11_{\ell}$}
				edge[->] node {} (h13);
			\node[draw, circle] (l12) at (20,0) {\footnotesize$12_{\ell}$}
				edge[->] node {} (h13);
			\node[draw, circle] (r1) at (20,-3.5) {\footnotesize$1_{r}$}
				edge[->] node {} (h13);

			\node[draw, circle] (l7) at (16,5.25) {\footnotesize$7_{\ell}$}
				edge[->] node {} (l11);
			\node[draw, circle] (l8) at (16,3.5) {\footnotesize$8_{\ell}$}
				edge[->] node {} (l12)
				edge[->] node {} (l11);
			\node[draw, circle] (l9) at (16,1.75) {\footnotesize$9_{\ell}$}
				edge[->] node {} (l12); 
			\node[draw, circle] (l10) at (16,0) {\footnotesize$10_{\ell}$}
				edge[->] node {} (l12);
			\node[draw, circle] (r2) at (16,-1.75) {\footnotesize$2_{r}$}
				edge[->] node {} (r1);
			\node[draw, circle] (r3) at (16,-3.5) {\footnotesize$3_{r}$}
				edge[->] node {} (r1); 
			\node[draw, circle] (r4) at (16,-5.25) {\footnotesize$4_{r}$}
				edge[->] node {} (r1)
				edge[->, bend right=6] node {} (l11);

			\node[draw, circle] (l5) at (12,5.25) {\footnotesize$5_{\ell}$}
				edge[->] node {} (l9)
				edge[->] node {} (l8)
				edge[->] node {} (l7); 
			\node[draw, circle] (l6) at (12,1.75) {\footnotesize$6_{\ell}$}
				edge[->] node {} (l10)
				edge[->] node {} (l9); 
			\node[draw, circle] (r5) at (12,-1.75) {\footnotesize$5_{r}$}
				edge[->] node {} (r2)
				edge[->] node {} (r3); 
			\node[draw, circle] (r6) at (12,-5.25) {\footnotesize$6_{r}$}
				edge[->] node {} (r3)
				edge[->] node {} (r4);

			\node[draw, circle] (l2) at (8,5.25) {\footnotesize$2_{\ell}$}
				edge[->] node {} (l5);
			\node[draw, circle] (l3) at (8,3.5) {\footnotesize$3_{\ell}$}
				edge[->] node {} (l5)
				edge[->] node {} (l6);
			\node[draw, circle] (l4) at (8,1.75) {\footnotesize$4_{\ell}$}
				edge[->] node {} (l6); 
			\node[draw, circle] (r7) at (8,0) {\footnotesize$7_{r}$}
				edge[->] node {} (r5);
			\node[draw, circle] (r8) at (8,-1.75) {\footnotesize$8_{r}$}
				edge[->] node {} (r5);
			\node[draw, circle] (r9) at (8,-3.5) {\footnotesize$9_{r}$}
				edge[->] node {} (r5)
				edge[->] node {} (r6); 
			\node[draw, circle] (r10) at (8,-5.25) {\footnotesize$10_{r}$}
				edge[->] node {} (r6);

			\node[draw, circle] (l1) at (4,3.75) {\footnotesize$1_{\ell}$}
				edge[->] node {} (l2)
				edge[->] node {} (l3)
				edge[->] node {} (l4);
			\node[draw, circle] (r11) at (4,0) {\footnotesize$11_{r}$}
				edge[->] node {} (l4)
				edge[->] node {} (r7)
				edge[->] node {} (r8); 
			\node[draw, circle] (r12) at (4,-3.5) {\footnotesize$12_{r}$}
				edge[->] node {} (r8)
				edge[->] node {} (r9)
				edge[->] node {} (r10);

			\node[draw, circle] (h0) at (0,0) {\footnotesize$\emptyset$}
				edge[->] node[left, black] {} (l1)
				edge[->] node[right, black] {} (r11)
				edge[->] node[left, black] {} (r12);
		\end{tikzpicture}
	\end{subfigure}
	\caption{\label{fig:partial-order-dag}Dag $D_H$. 
	Vertices $i_{\ell}$ and $i_r$ represent
	$S_{\ell}(h_i)$ and $S_r(h_i)$, respectively.}
\end{figure}

\begin{lemma}\label{lemma:separating-holes}
	Let $S$ be a set of trajectories and $\clus$ be a \goodClu{} $k$-clustering of $S$.
	There is an ordering $C_1,C_2,\ldots,C_k$ of the clusters in $\clus$ such that,
	for every $1 \leq i \leq k-1$, 
	\[\bigcup_{j=1}^i C_j \quad \mbox{ and } \quad \bigcup_{j=i+1}^k C_j\]
	are separated by some uncovered hole $h_i$ of $H$.
\end{lemma}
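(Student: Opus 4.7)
I would prove the lemma by induction on $k$. The base case $k=1$ is trivial since no separating hole is required. For the inductive step $k \geq 2$, the key step is to find a ``peel-off'' uncovered hole $h^* \in H$ isolating a single cluster on one of its sides. Letting $C_1$ denote this single cluster and setting $h_1 := h^*$, I recurse on $\clus' := \clus \setminus \{C_1\}$, viewed as a clustering of $S' := \bigcup_{C \in \clus'} C$. First, I would verify $\clus'$ is \goodClu{} in $S'$: any two clusters of $\clus'$ are separated in $\clus$ by some uncovered hole of $H(S)$, and the unique hole of $H(S')$ containing it (after the merging induced by removing $C_1$'s trajectories from the arrangement) remains uncovered by $\clus'$ and induces the same bi-partition on $\clus'$. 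The induction then yields an ordering $C_2, \ldots, C_k$ of $\clus'$ and uncovered holes $\tilde{h}_2, \ldots, \tilde{h}_{k-1}$ of $H(S')$ realizing the prefix/suffix separations within $\clus'$.

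I would then lift each $\tilde{h}_i$ back to an uncovered hole $h_i$ of $H(S)$ that induces the same bi-partition of $\clus'$ and also places $C_1$ on the prefix side $\{C_1, \ldots, C_i\}$. For this, I decompose $\tilde{h}_i$ in the finer arrangement $H(S)$: the portion of $\tilde{h}_i$ not covered by $\cspan{C_1}$ is a disjoint union of holes of $H(S)$, and one argues that at least one of these pieces is uncovered by $\clus$ and has $C_1$ on the correct side. Here the one-dimensional structure is essential: $C_1$'s trajectories are straight lines that do not pass through the interior of any uncovered hole, so $C_1$ lies consistently on a single side of every uncovered hole, and since $C_1$ is on a fixed side of $h_1$, a careful choice of component of $\tilde{h}_i$ places $C_1$ as required.

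The main obstacle is establishing the existence of the peel-off hole $h^*$. I plan to argue this by an extremal choice: among all uncovered holes $h \in H$ whose induced bi-partition of $\clus$ has both sides non-empty (recall that since $h$ is uncovered, no cluster straddles $h$, so each cluster is entirely on one side), pick one minimizing the size of the smaller side. Supposing for contradiction this minimum is at least $2$, take two clusters $C, C'$ on the minority side and invoke the \goodClu{} hypothesis to obtain an uncovered hole $h'$ separating them. Exploiting that holes are convex polygonal regions in the \strip{} and that each cluster is entirely on one side of both $h$ and $h'$, the geometric interaction of $h$ and $h'$ in the one-dimensional setting should produce an uncovered hole with a strictly smaller minority side, contradicting the extremal choice. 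This is the most delicate part of the argument, and relies crucially on the linear geometry of trajectories and convexity of holes to rule out ``crossing'' bi-partitions of minimum size.
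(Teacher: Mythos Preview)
Your approach differs substantially from the paper's. The paper does not recurse on $S' = S \setminus C_1$ at all; it works entirely inside $S$ and its set $\tilde H$ of holes uncovered by $\clus$, and builds a chain $A_1 \subsetneq A_2 \subsetneq \cdots \subsetneq A_{k-1}$ in $\dc_{\tilde H} = \bigcup_{h \in \tilde H}\{S_\ell(h),S_r(h)\}$ greedily: $A_1$ is a cardinality-minimal nonempty element of $\dc_{\tilde H}$, and each $A_i$ is a cardinality-minimal element of $\dc_{\tilde H}$ strictly containing $A_{i-1}$. One then argues that each $A_i \setminus A_{i-1}$ is a single cluster of $\clus$, invoking the well-separated hypothesis at every step. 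Your extremal argument for the existence of the peel-off hole $h^*$ is exactly the paper's argument that $A_1$ consists of a single cluster, so that piece is shared; what differs is everything after.

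The genuine gap in your plan is the lifting step, and I do not see that it can be carried out as you describe. The hole $\tilde h_i \in H(S')$ handed back by the recursion may lie entirely inside $\cspan{C_1}$: nothing prevents the trajectories of $C_1$ from straddling $\tilde h_i$ at every height in its $t$-range, since $h^*$ only pins $C_1$ to one side of $S'$ at the heights of $h^*$, and $\tilde h_i$ need not reach those heights. In that case every sub-hole of $\tilde h_i$ in $H(S)$ is covered by $C_1$, and there is no uncovered piece to pick. Even when $\tilde h_i \not\subseteq \cspan{C_1}$, the uncovered pieces of $\tilde h_i$ might all have $C_1$ on the side of the suffix $\{C_{i+1},\ldots,C_k\}$ rather than of the prefix; in other words, the ordering $C_2,\ldots,C_k$ produced by the induction need not be compatible with simply prepending $C_1$. (Morally, $C_1$ may have to be inserted somewhere in the middle of the recursive ordering, not at the front.) The paper's direct construction sidesteps all of this: because every $A_i$ is already of the form $S_\ell(h)$ or $S_r(h)$ for an uncovered hole $h$ of the \emph{original} $S$, no passage between the arrangements $H(S')$ and $H(S)$ is ever required, and the ``correct side'' issue never arises.
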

\begin{proof}
	Let $\tilde{H}$ be the set of holes of~$S$ uncovered by~$\clus$.
	Suppose that~$k \geq 2$, otherwise the statement is trivial.
	Consider the following construction. Let 
	\[	A_1 = \argmin_{A \in \dc_{\tilde{H}}} |A|, \quad\textrm{and}\quad
		A_i = \argmin_{\substack{A \in \dc_{\tilde{H}},\\A_{i-1} \subset A}} |A|,
		\textrm{ for } 2 \leq i \leq k-1. \]%
	Also, consider a hole $h_i \in \tilde{H}$ such 
	that $A_i \in \aset{S_{\ell}(h_i), S_r(h_i)}$, for ${1 \leq i \leq k-1}$.

	Take $C_1 = A_1$, $C_i = A_i \setminus A_{i-1}$, for each $2 \leq i \leq k-1$,
	and~$C_k = S \setminus A_{k-1}$.
	Observe that~$\bigcup_{j=1}^i C_j$ and $\bigcup_{j=i+1}^k C_j$ are separated by~$h_i$,
	for all $1 \leq i \leq k-1$. Now, we only need to show that this construction is well-defined,
	i.e., none of the sets is empty and~$\clus = \aset{C_1, C_2, \ldots, C_k}$.

	Since~$\clus$ is \goodClu{}, $|\tilde{H}| \geq k - 1 \geq 1$.
	Hence,~$C_1$ exists and is nonempty. Moreover,~$C_1$ is a cluster of~$\clus$, otherwise
	there would be at least two clusters of~$\clus$ in~$C_1$. However, these two clusters would
	have to be separated by an uncovered hole~$h$, implying that $\min(|S_{\ell}(h)|,|S_r(h)|) < |C_1|$.
	This contradicts the choice of~$C_1$.

	Fix $2 \leq i \leq k-1$.
	Suppose that $C_1,C_2,\ldots,C_{i-1}$ are nonempty and that they are clusters of~$\clus$.
	Since~$\clus$ is a \goodClu{} $k$-clustering, the other~$k-i+1 \geq 2$ clusters of~$\clus$
	are inside~$S \setminus A_{i-1}$. Hence, there is a hole~$h$ in~$\tilde{H}$ separating some
	pair of clusters inside~$S \setminus A_{i-1}$. Thus, we can conclude that~$A_{i-1}$ is
	properly contained in~$S_{\ell}(h)$ or in~$S_r(h)$, implying that~$A_i$ and~$C_i$ are nonempty.
	Since~$A_i$ has minimum cardinality and~$A_{i-1} \subset A_i$, by the same argument used previously
	for~$C_1$, the set~$A_{i} \setminus A_{i-1}$ indeed is a cluster of~$\clus$.
	Therefore, ${\clus = \aset{C_1,C_2,\ldots,C_k}}$.
\end{proof}

A \defi{$k$-chain} is a sequence $\aset{C_i}_{i=1}^{k-1}$ of $k-1$ 
distinct elements of $\dc_H \setminus \aset{S, \emptyset}$
such that ${C_1 \preceq C_2 \preceq \cdots \preceq C_{k-1}}$.
Note that, for any $k$-chain, $C_{k-1} \preceq S$.
A $k$-chain defines the $k$-clustering 
${\aset{C_1, C_2 \setminus C_1, C_3 \setminus C_2, \ldots, C_{k-1} \setminus C_{k-2}, S \setminus C_{k-1}}}$.

\begin{thm}\label{thm:desirable-clustering}
	Let $S$ be a set of trajectories. Every \goodClu{} $k$-clustering 
	of $S$ is defined by a $k$-chain.
\end{thm}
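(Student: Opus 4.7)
The plan is to convert the separating structure provided by Lemma~\ref{lemma:separating-holes} directly into a $k$-chain. Let $\clus$ be a well-separated $k$-clustering of $S$. First, I would invoke Lemma~\ref{lemma:separating-holes} to obtain an ordering $C_1, C_2, \ldots, C_k$ of the clusters of $\clus$ together with uncovered holes $h_1, \ldots, h_{k-1}$ such that, for each $1 \leq i \leq k-1$, the sets $\bigcup_{j \leq i} C_j$ and $\bigcup_{j > i} C_j$ are separated by $h_i$.

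Then, for each $i$, I would define $A_i = \bigcup_{j=1}^{i} C_j$. The first key step is to verify that $A_i \in \dc_H$. Because $h_i$ separates $A_i$ from $S \setminus A_i$, one of the inclusions $A_i \subseteq S_{\ell}(h_i)$ or $A_i \subseteq S_r(h_i)$ holds, and the complementary inclusion holds for $S \setminus A_i$. Since $\{S_{\ell}(h_i), S_r(h_i)\}$ also partitions $S$, these inclusions are forced to be equalities, so $A_i$ is precisely one of $S_{\ell}(h_i)$ or $S_r(h_i)$, hence an element of $\dc_H$.

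The second step is to show that $\aset{A_i}_{i=1}^{k-1}$ is a $k$-chain. The inclusions $A_1 \subset A_2 \subset \cdots \subset A_{k-1}$ are immediate, and are strict because each $C_i$ is nonempty (clusters of a $k$-clustering are nonempty by Lemma~\ref{lemma:opt-nonempty}, or one can simply invoke the fact that well-separation forces nonemptiness, as Lemma~\ref{lemma:separating-holes} already assumes). Strictness also gives distinctness, and $A_i \neq S$ because $C_k \neq \emptyset$, and $A_i \neq \emptyset$ because $C_1 \neq \emptyset$. Thus the $A_i$ form a $k-1$ element chain in $\dc_H \setminus \aset{S, \emptyset}$ under $\preceq$.

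Finally, I would match the clustering produced by this $k$-chain against $\clus$: by construction, $A_1 = C_1$ and $A_i \setminus A_{i-1} = C_i$ for $2 \leq i \leq k-1$, while $S \setminus A_{k-1} = C_k$. Hence the $k$-chain $\aset{A_i}_{i=1}^{k-1}$ defines exactly $\clus$. The only place where a little care is needed is the first step, identifying $A_i$ with one of $S_{\ell}(h_i)$ or $S_r(h_i)$; the rest is bookkeeping.
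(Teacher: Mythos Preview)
Your proposal is correct and follows essentially the same route as the paper: invoke Lemma~\ref{lemma:separating-holes}, set $A_i = \bigcup_{j\le i} C_j$, identify each $A_i$ with one side of the separating hole $h_i$ so that $A_i \in \dc_H$, and read off the $k$-chain. One small correction: Lemma~\ref{lemma:opt-nonempty} concerns optimal clusterings for \kcsoma{} and does not apply here, so your hedge (that the proof of Lemma~\ref{lemma:separating-holes} already yields nonempty clusters) is the right justification for strictness and for $A_i \notin \{\emptyset, S\}$.
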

\begin{proof}
	Let $\clus = \aset{C_1,C_2,\ldots,C_k}$ be a \goodClu{} $k$-clustering of $S$.
	Suppose that the clusters in $\clus$ are sorted as in Lemma~\ref{lemma:separating-holes}.	

	Thus, for every $1 \leq i \leq k-1$, we have that $S_i = \cup_{j=1}^i C_j \in \aset{S_{\ell}(h_i),S_r(h_i)}$
	for some uncovered hole $h_i$.  Consider $S_k = S$. 
	Hence, $S_i \in \dc_H$ and $S_i \subseteq S_{i+1}$ for every $1 \leq i \leq k-1$. 
	Therefore, $\aset{\cup_{j=1}^i C_j}_{i=1}^{k-1}$ is a $k$-chain that
	defines $\clus$.
\end{proof}

Notice that $\emptyset$ and $S$ are the source and sink of $D_H$, respectively.
For each~$C$ in~$\dc_H$ and each~$j \geq 1$, let $\sdrec(C,j)$ be the sum of the clusters diameters of an 
optimal \goodClu{} $j$-clustering of $S \setminus C$ for \kcsoma{}. 
The following recurrence holds. 
\begin{equation}\label{eq:min-sum-recorrence}
\sdrec(C,j) = \left\{\begin{array}{rl}
	0, & \mbox{if $C = S$}\\
	\diam(S \setminus C), & \mbox{if $j = 1$}\\
	\displaystyle{\min_{C \preceq C'}\aset{\diam(C' \setminus C) + \sdrec(C', j-1)}}, & 
	\mbox{otherwise.}
	\end{array}\right.
\end{equation}

Given a set of $n$ trajectories $S$ and an integer $k$, with $1 \leq k \leq n$, the minimum sum 
of the diameters of an optimal \goodClu{} $k$-clustering of~$S$ is~${\sdrec(\emptyset, k)}$.
A straightforward dynamic programming implementation of recurrence~\eqref{eq:min-sum-recorrence}
to calculate $\sdrec(\emptyset, k)$ consists in filling a~$|\dc_H| \times k$ matrix, whose
rows represent elements of $\dc_H$ and columns represent values of~$j$ in recurrence~\eqref{eq:min-sum-recorrence}.
To fill in each matrix position from row~$S$ takes time $\Oh(1)$, and from column $j = 1$ takes time $\Oh(n \log n)$, 
which is the time to calculate the diameter of a set of trajectories. The remaining matrix positions
are filled in an order such that, when filling in position $(C, j)$, the values of $\sdrec(C', j')$, 
for all $C \preceq C'$ and $j' \leq j$, are already filled. So it takes time $\Oh(|\dc_H|\, n \log n)$ 
to calculate each of these positions. Therefore, it is possible to calculate $\sdrec(\emptyset, k)$
in time $\Oh(k |\dc_H|^2\, n \log n) = \Oh(n^6 \log n)$.
After the whole matrix is filled, we can construct an optimal \goodClu{}
$k$-clustering of~$S$ in time~$\Oh(k |\dc_H|\, n \log n)$ by tracing back recurrence~\eqref{eq:min-sum-recorrence}.

\subsection{Approximation using \goodClu{} $k$-clusterings}

Let $S$ be a set of trajectories and~$\clus^*$ be an optimal $k$-clustering of $S$.
We say that~$\clus'$ is an optimal \goodClu{} $k$-clustering if~$\sd(\clus')$
is minimal among all \goodClu{} $k$-clusterings of~$S$.
As shown in Figure~\ref{fig:bad-separated}, there are cases in which
no optimal $k$-clustering for \kcsoma{} is \goodClu{}. 

Next, we prove a bound on the value of an optimal \goodClu{} $k$-clustering in terms
of the value of an optimal $k$-clustering. This bound allows one to conclude that the 
algorithm presented in Section~\ref{sec:well-separated-sum} is an $\Oh(k)$-approximation 
for the \kcsoma{}.


\begin{thm}\label{thm:k-approximation-sum}
	Let $S$ be a set of $n \geq 3$ trajectories and $\clus^*$ be an optimal $k$-clustering of~$S$. 
        If~$\clus'$ is an optimal \goodClu{} $k$-clustering of~$S$ for the \kcsoma, then
	\[ \sd(\clus') \leq (1 + \lfloor k/2 \rfloor) \sd(\clus^*). \]
\end{thm}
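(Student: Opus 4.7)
The plan is to construct, from the optimal $\clus^*$, a \goodClu{} $k$-clustering $\clus$ of $S$ with $\sd(\clus) \leq (1 + \lfloor k/2 \rfloor) \sd(\clus^*)$. Since $\clus'$ is optimal among \goodClu{} $k$-clusterings, this immediately gives $\sd(\clus') \leq \sd(\clus)$ and hence the theorem. I would produce $\clus$ by induction on $k$; the base case $k = 1$ is immediate, since the unique $1$-clustering $\{S\}$ serves as both $\clus^*$ and $\clus$, and $1 + \lfloor 1/2 \rfloor = 1$ suffices.

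For the inductive step with $k \geq 2$, I apply Lemma~\ref{lemma:sum-separation-uncovered} to obtain a hole $h$ of $S$ uncovered by $\clus^*$. Since no cluster's \cspan{} contains $h$, every cluster of $\clus^*$ lies entirely on one side of $h$, giving a partition $\clus^* = \clus^*_\ell \cup \clus^*_r$ into $k_\ell$ and $k_r$ clusters with $k_\ell + k_r = k$ and $k_\ell, k_r \geq 1$. A standard exchange argument shows that $\clus^*_\ell$ and $\clus^*_r$ are themselves optimal \kcsoma{} clusterings of $S_\ell(h)$ and $S_r(h)$ respectively, so the inductive hypothesis supplies \goodClu{} clusterings $\clus_\ell$ and $\clus_r$ of the two sides satisfying $\sd(\clus_\ell) \leq (1 + \lfloor k_\ell/2 \rfloor) \sd(\clus^*_\ell)$ and the analogous bound for $\clus_r$. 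I then take $\clus = \clus_\ell \cup \clus_r$; the cost bound follows from $k_\ell, k_r \leq k-1$, which gives $\lfloor k_\ell/2 \rfloor, \lfloor k_r/2 \rfloor \leq \lfloor k/2 \rfloor$ and hence $\sd(\clus) \leq (1 + \lfloor k/2 \rfloor)(\sd(\clus^*_\ell) + \sd(\clus^*_r)) = (1 + \lfloor k/2 \rfloor) \sd(\clus^*)$.

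The more delicate part --- and the main obstacle --- is verifying that $\clus$ is \goodClu{} in $S$. For pairs of clusters with one taken from each side of $h$, the hole $h$ remains uncovered in $\clus$: each cluster of $\clus_\ell$ has its trajectories to the left of $h$ throughout $h$'s time range, and symmetrically for $\clus_r$. For two clusters on the same side, the inductive hypothesis produces an uncovered separating hole in the sub-problem, but this hole need not lie in $H(S)$, because trajectories of the opposite side may cross it at times outside the time range of $h$. My plan is to refine any such separating hole to a sub-hole belonging to $H(S)$ that still separates the same pair and is uncovered by every cluster of $\clus$: for same-side clusters the uncovered condition is inherited from the inductive step, while for opposite-side clusters it follows from the observation that the sub-hole lies on the appropriate side of $h$ within $h$'s time range, so no \cspan{} of the opposite side can engulf it. Making this hole-refinement argument fully rigorous, accounting for what happens at times outside the range of $h$, is where I anticipate most of the careful work.
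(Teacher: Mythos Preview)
Your inductive approach is genuinely different from the paper's, and the obstacle you flag is real --- in fact it is more serious than you seem to expect. The paper avoids it entirely by a direct, non-inductive construction: take \emph{all} holes of $S$ that are uncovered by $\clus^*$ simultaneously; the common refinement they induce gives a partition $\dlus = \{D_1,\ldots,D_\ell\}$ of $S$ with $\ell \leq k$. This $\dlus$ is well-separated in $S$ by construction, because every $D_j$ lies entirely on one side of every uncovered hole, so those very holes separate each pair and remain uncovered by $\dlus$. The bound $\sd(\dlus) \leq (1+\lfloor k/2\rfloor)\sd(\clus^*)$ then comes from a counting argument: $\diam(D_j)$ is at most the sum of the diameters of the clusters of $\clus^*$ inside $D_j$ plus the areas of the covered holes that separate pairs of those clusters, and any such hole can be charged to at most $\lfloor k/2\rfloor$ of the $D_j$'s (since each $D_j$ that uses it must contain at least two clusters of $\clus^*$).

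Your plan, by contrast, picks a single uncovered hole $h$, recurses on each side, and then tries to glue. The difficulty you identify --- that a separating hole $h'$ for two clusters of $\clus_\ell$ lives in $H(S_\ell(h))$ rather than $H(S)$ --- is only half the problem. You correctly note that any sub-hole $h'' \in H(S)$ inside $h'$ still separates the same pair and remains uncovered by $\clus_\ell$. But you also need $h''$ to be uncovered by every cluster of $\clus_r$, and you have essentially no control over $\clus_r$: its clusters are arbitrary subsets of $S_r(h)$, and outside the time range of $h$ their spans can sweep across $h'$ in any way. There is no evident reason why some sub-hole of $h'$ must escape all of them, and your suggested fix (``the sub-hole lies on the appropriate side of $h$ within $h$'s time range'') does not apply, since $h'$ need not lie in the time range of $h$ at all. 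Unless you can exhibit an additional structural property of the optimal well-separated clusterings $\clus_\ell$, $\clus_r$ that forces such an uncovered sub-hole to exist, the induction does not close. The paper's all-at-once construction sidesteps this issue completely, which is its main advantage.
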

\begin{proof}
	For $k = 1$, clearly $\sd(\clus^*) = \sd(\clus')$.
	By Lemma~\ref{lemma:sum-separation-uncovered}, this equality
	is also true for~$k = 2$. Thus, we may assume that $k \geq 3$.
	
	By Lemma~\ref{lemma:sum-separation-uncovered}, there is at least one hole 
        of~$S$ uncovered by~$\clus^*$. The holes uncovered by $\clus^*$ partition~$S$
	into~$\dlus = \aset{D_1,D_2,\ldots,D_{\ell}}$ with $\ell < k$.
	Notice that~$\dlus$ is \goodClu{}.
	Moreover, each cluster in~$\clus^*$ is contained in some~$D$ in~$\dlus$.

	We denote by~$\restric{\clus^*}{D}$ the subset of clusters~$\aset{C \in \clus^* \st C \subseteq D}$. 
	For each ${1 \leq j \leq \ell}$, let~$H_j$ be the set of holes of~$S$ that separate a pair of 
        clusters in~$\restric{\clus^*}{D_j}$.
	Each hole in~$H_j$ is covered by~$\clus^*$, since it separates clusters contained in the same~$D$ of~$\dlus$.

	For each $D_j$ in~$\dlus$,
	\[ \diam(D_j) \leq \sum_{C \in \restric{\clus^*}{D_j}} \diam(C) + \sum_{h \in H_j} \area(h).\]
	Hence,
	\begin{align*}
		\sd(\dlus) & = \sum_{j = 1}^{\ell} \diam(D_j) 
		\leq \sum_{C \in \clus^*} \diam(C) + \sum_{j = 1}^{\ell} \sum_{h \in H_j} \area(h)
		\leq \sd(\clus^*) + \frac{k}{2}\sd(\clus^*).
	\end{align*}
	The last inequality comes from the fact that, if a hole~$h$ belongs to some~$H_j$, 
	then~$|\restric{\clus^*}{D_j}| \geq 2$, since~$h$ separates two clusters in~$\restric{\clus^*}{D_j}$. 
        So, each hole of~$S$ can appear in at most~$\lfloor k/2 \rfloor$ different sets $H_j$.

	Therefore, $\sd(\clus') \leq \sd(\dlus) \leq (1 + \lfloor k/2 \rfloor)\sd(\clus^*)$.
\end{proof}

Notice that, if the sets $H_j$ in the proof of Theorem~\ref{thm:k-approximation-sum} were pairwise disjoint, 
each hole in~$\bigcup_{j=1}^{\ell}H_j$ would appear only once in the summation, implying that
\[ \sum_{j = 1}^{\ell} \sum_{h \in H_j} \area(h) \leq \sd(\clus^*). \]
Thus, we would have $\sd(\clus') \leq 2\sd(\clus^*)$.

\section{Minimizing the maximum diameter}
\label{sec:max-area}

Let $\clus$ be a $k$-clustering of a set of trajectories. We denote by
\[\md(\clus) = \max_{C \in \clus} \diam(C)\]
the maximum diameter of $\clus$. The second $k$-clustering problem that we
consider is the following.
\begin{prob}[\kcmax]
	Given a finite set of trajectories $S$ and a positive integer~$k$, find 
	a $k$-clustering $\clus$ of $S$ such that $\md(\clus)$ is minimum.
\end{prob}

Let $S$ be a finite set of trajectories. Let $\clus^*$ be an optimal $k$-clustering of $S$,
i.e., $\md(\clus^*)$ is minimal among all $k$-clusterings of $S$. We denote by 
$\optmd(S,k) = \md(\clus^*)$ the value of an optimal $k$-clustering, which is, in this case,
the maximum diameter of $\clus^*$.

Unlike the \kcsoma{}, there are cases in which no optimal $k$-clustering for \kcmax{}
is defined by a \goodSeq{}. Figure~\ref{fig:max-optimal-no-hole} shows an example
for $k = 2$. The $2$-clustering of the example has no uncovered hole. Thus, this clustering cannot
be defined by a \goodSeq{}. Both clusters have diameter 1 and it is not difficult to
check that any cluster with three trajectories has diameter greater than 1. Also, any
other $2$-clustering has maximum diameter greater than 1. Hence, the $2$-clustering
of the example is optimal for the \kcmax{}. Therefore, we
cannot use the same approach used for the \kcsoma{}, for a constant value of $k$. 
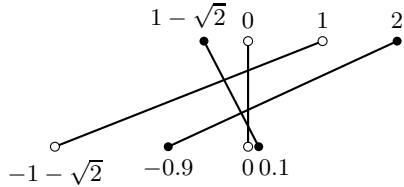
\begin{figure}[ht]
	\center
	\begin{tikzpicture}[scale=1.4]
		\coordinate (l10) at (-1.81421, 0);
		\coordinate (l11) at (0.7, 1.0);
		\coordinate (l20) at (-0.75, 0);
		\coordinate (l21) at (1.4, 1.0);
		\coordinate (l30) at (0, 0);
		\coordinate (l31) at (0, 1.0);
		\coordinate (l40) at (0.1, 0);
		\coordinate (l41) at (-0.41424, 1.0);
		\draw[thick] (l10) -- (l11);
		\draw[thick] (l30) -- (l31);
		\draw[thick] (l20) -- (l21);
		\draw[thick] (l40) -- (l41);
		\segendW{(l10)}{1.2pt};
		\segendW{(l11)}{1.2pt};
		\segend{(l20)}{1.2pt};
		\segend{(l21)}{1.2pt};
		\segendW{(l30)}{1.2pt};
		\segendW{(l31)}{1.2pt};
		\segend{(l40)}{1.2pt};
		\segend{(l41)}{1.2pt};
		\node[below] at ($(l10) - (0, 0.02)$) {\scriptsize $-1-\sqrt{2}$};
		\node[below] at ($(l20) - (0, 0.02)$) {\scriptsize $-0.9$};
		\node[below] at ($(l30) - (0, 0.02)$) {\scriptsize $0$};
		\node[below] at ($(l40) - (-0.15, 0.02)$) {\scriptsize $0.1$};
		\node[above] at ($(l11) + (0, 0.02)$) {\scriptsize $1$};
		\node[above] at ($(l21) + (0, 0.02)$) {\scriptsize $2$};
		\node[above] at ($(l31) + (0, 0.02)$) {\scriptsize $0$};
		\node[above] at ($(l41) + (-0.15, 0.02)$) {\scriptsize $1-\sqrt{2}$};
	\end{tikzpicture}
	\caption{\label{fig:max-optimal-no-hole} Different end types
	represent different clusters. This is the only optimal $2$-clustering for $\kcmax$ and
	it is not defined by any \goodSeq{}.}
\end{figure}

However, if we are specifically looking
for an optimal \goodClu{} $k$-clustering for \kcmax{}, a simple
adaptation of recurrence~\eqref{eq:min-sum-recorrence} can be used,
resulting in an algorithm that finds an optimal \goodClu{}
$k$-clustering for \kcmax{} in time~$\Oh(k |\dc_H|^2\, n \log n)$.
For the \kcmax{} without restriction on the clustering, 
we show some approximation results in the 
next subsections.

\subsection{First approximation}\label{sec:max-first-approximation}

Our first result for the \kcmax{} is
achieved by a reduction to the classical (metric) $k$-center problem.
\begin{prob}[\kcentro]
	Given a positive integer $k$, a finite set $S$, and a distance function $d$ over $S$,
	find a subset $X$ of $S$ such that $|X| \leq k$ and 
	$\max_{s \in S}\min_{x \in X}d(s,x)$ is minimum.
\end{prob}
The \kcentro{} is known to be
NP-hard~\cite{Gonzalez85,KarivHakimi79}, but there are simple 
2-approximations~\cite{Gonzalez85,HochbaumShmoys86}. Moreover, no better
approximation factor is possible unless P~$=$~NP~\cite{Gonzalez85,HsuNemhauser79}.

The elements of the desired set $X$ are called \defi{centers}. We basically want to find
at most $k$ centers such that the largest distance between an element of~$S$ and its nearest
center is minimized. Note that, for each center~$x$ in~$X$, we have an \defi{induced} cluster
\[C_x = \aset{s \in S \st d(s,x) \leq d(s,x')\mbox{, for all } x' \in X}.\]
A $k$-clustering induced by $X$ is roughly $\clus_X = \aset{C_x \st x \in X}$. In fact, at first,~$\clus_X$ 
may not be a $k$-clustering since its clusters are not necessarily disjoint. If this is the case, that is, 
if $C_{x_1} \cap C_{x_2} \neq \emptyset$ for some $x_1$ and $x_2$ in $X$, then we can just remove this 
intersection from either one of the clusters. 

From an instance $(S,k)$ of the \kcmax{}, we build an instance of the \kcentro{} by setting the distance between
two trajectories $s_1$ and $s_2$ in~$S$ as $\diam(\{s_1,s_2\})$. These distances satisfy the triangle inequality,
as shown by Lemma~\ref{lemma:diameter-is-metric}, so the instance is metric and we can apply 
any 2-approximation for the \kcentro{}, outputting a $k$-clustering induced by the~$k$ selected centers.

\begin{lemma}\label{lemma:diameter-is-metric}
	Let $S$ be a set of trajectories. For any three trajectories $a$, $b$, and $c$ of $S$,
	we have $\diam(\aset{a,b}) + \diam(\aset{b,c}) \geq \diam(\aset{a,c})$.
\end{lemma}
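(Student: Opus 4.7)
The plan is to unwind the definition of $\diam$ given in Equation~\eqref{eq:integral-area} and reduce the claim to the ordinary triangle inequality on $\R$, applied pointwise in $t$ and then integrated over the time-strip.

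First, I would observe that for a two-element cluster $\{u,v\}$, the left side at time $t$ is $\min(x_u(t),x_v(t))$ and the right side is $\max(x_u(t),x_v(t))$, so
\[
\diam(\{u,v\}) \ = \ \int_0^1 \bigl(\max(x_u(t),x_v(t)) - \min(x_u(t),x_v(t))\bigr)\, \mathrm{d}t \ = \ \int_0^1 |x_u(t)-x_v(t)|\, \mathrm{d}t.
\]
Specializing to the pairs $\{a,b\}$, $\{b,c\}$, $\{a,c\}$ rewrites everything in the inequality as a Lebesgue integral of an absolute value of a difference of affine functions.

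Next, for each fixed $t \in [0,1]$, the standard triangle inequality on $\R$ gives
\[
|x_a(t) - x_c(t)| \ \leq \ |x_a(t) - x_b(t)| + |x_b(t) - x_c(t)|.
\]
Integrating both sides over $[0,1]$ (and using monotonicity and linearity of the integral) produces exactly
\[
\diam(\{a,c\}) \ \leq \ \diam(\{a,b\}) + \diam(\{b,c\}),
\]
which is the desired inequality.

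I do not expect a real obstacle here: the statement is essentially the triangle inequality for the $L^1$ seminorm on the space of affine functions $x(t) = x(0) + vt$ on $[0,1]$, and the reduction only relies on the integral formula in Equation~\eqref{eq:integral-area}. The only small care needed is to note that the diameter formula specialized to a two-element set really is the integral of $|x_u(t)-x_v(t)|$, which follows immediately from the definitions of left side and right side of a cluster.
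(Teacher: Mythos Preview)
Your proposal is correct and follows essentially the same approach as the paper: both rewrite $\diam(\{u,v\})$ as $\int_0^1 |x_u(t)-x_v(t)|\,\mathrm{d}t$, apply the pointwise triangle inequality on $\Real$, and then use linearity and monotonicity of the integral to conclude.
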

\begin{proof}
	Given two trajectories of~$S$, say~$r$ and~$s$, we have that 
	\[
		\diam(\aset{r,s}) = 
		\int_0^1 \! \lpare{\max_{i \in \aset{r,s}} x_i(t) - \min_{i \in \aset{r,s}} x_i(t)} \mathrm{d}t
		= \int_0^1 \! |x_r(t) - x_s(t)|\, \mathrm{d}t.
	\]
	By the linearity of integrals and the subadditivity of modulus, we have that
	\begin{align*}
		\diam(\aset{a,b}) + \diam(\aset{b,c}) & =
		\int_0^1 \! \lpare{|x_a(t) - x_b(t)| + |x_b(t) - x_c(t)|} \mathrm{d}t\\
		& \geq \int_0^1 \! |x_a(t) - x_c(t)|\, \mathrm{d}t\\
		& = \diam(\aset{a,c}).\qedhere
	\end{align*}
\end{proof}

Let $s$ be a trajectory in $S$. To facilitate the reading, in what follows we abuse notation and
consider~$s^t = x_s(t)$ as the position of~$s$ at time~$t$ and also 
as a point ${s^t = (x_s(t),t)}$ in the plane. 
In particular, for~$t = 1/2$ we write~$\bar{s}$ instead of~$s^{1/2}$.
The precise meaning of~$s^t$ will be clear from the context.

\begin{lemma}\label{lemma:max-distance-in-middle}
Let $s$ and $v$ be two trajectories. If $|\bar{s} - \bar{v}| > r$ for some~${r > 0}$, 
then $\diam(\aset{s,v}) > r$.
\end{lemma}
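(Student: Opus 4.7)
The plan is to express $\diam(\{s,v\})$ as the integral $\int_0^1 |x_s(t) - x_v(t)|\,dt$, exactly as in the computation used in the proof of Lemma~\ref{lemma:diameter-is-metric}, and then to bound this integral from below by the value of the integrand at $t = 1/2$.

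Set $g(t) = x_s(t) - x_v(t)$. Since both $x_s$ and $x_v$ are affine functions of $t$, so is $g$. Consequently, $|g(t)|$ is the absolute value of a linear function and is therefore convex on $[0,1]$. Applying Jensen's inequality to the uniform distribution on $[0,1]$, whose mean is $1/2$, yields
\[ |g(1/2)| \;\leq\; \int_0^1 |g(t)|\,dt \;=\; \diam(\{s,v\}). \]
Since the left-hand side equals $|\bar{s}-\bar{v}|$, which by hypothesis exceeds $r$, the desired strict inequality $\diam(\{s,v\}) > r$ follows immediately.

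If one prefers to avoid invoking convexity by name, an equivalent elementary argument proceeds by cases on whether $g$ vanishes in $(0,1)$. If it does not, then $g$ is linear and of constant sign, so $\int_0^1 |g(t)|\,dt = \bigl|\int_0^1 g(t)\,dt\bigr| = |g(1/2)|$, because the average of a linear function on $[0,1]$ equals its midpoint value. If instead $g$ vanishes at some $t_0 \in (0,1)$, the graph of $|g|$ is made of two triangles, and a direct computation of areas gives $\int_0^1 |g(t)|\,dt = \tfrac{1}{2}|g(0)|t_0 + \tfrac{1}{2}|g(1)|(1-t_0)$; combining this with the identity $|g(1/2)| = |g(0)|(2t_0-1)/(2t_0)$ (assuming $t_0 > 1/2$; the other case is symmetric) reduces the desired inequality to $(t_0-1)^2 \geq 0$. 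There is no real obstacle in this lemma — its content is essentially the convexity of $|g|$, and both routes are very short.
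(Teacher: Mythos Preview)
Your proof is correct. The convexity/Jensen step is valid: $g(t)=x_s(t)-x_v(t)$ is affine, so $|g|$ is convex, and Jensen on $[0,1]$ gives $|g(1/2)|\le\int_0^1|g(t)|\,dt=\diam(\{s,v\})$. The alternative elementary case split is also fine; the algebra you sketch does reduce to a nonnegative square.

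The paper takes a genuinely different, geometric route. It introduces an auxiliary trajectory $p$ parallel to $s$ passing through $\bar v$, so that $\diam(\{s,p\})=|\bar s-\bar v|>r$ identically in $t$. It then decomposes the relevant region of the time-strip into four pieces $A,B,C,D$ (see their Figure), writes $\diam(\{s,v\})=A+C+D$ and $\diam(\{s,p\})=B+C$, and uses the congruence of the two triangles forming $\cspan{\{v,p\}}$ to get $D=A+B\ge B$, whence $\diam(\{s,v\})\ge\diam(\{s,p\})>r$. Your argument is shorter and avoids the auxiliary construction and picture; the paper's argument, on the other hand, stays within the area/region language used throughout the article and makes the geometric reason (symmetry of $\cspan{\{v,p\}}$ about $t=1/2$) explicit. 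Both ultimately exploit the same midpoint symmetry, but your Jensen formulation packages it more cleanly.
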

\begin{proof}
	Let $p$ be a trajectory parallel to $s$ that passes through $\bar{v}$. Hence,
	$\diam(\aset{s,p}) = |s^t - p^t| > r$, for any $t \in [0,1]$. If $v = p$,
	we are done. So, suppose that $v \neq p$.

	Since $p$ and $v$ intersect at $t = 1/2$, we have that $|v^0 - p^0| = |v^1 - p^1|$.
	Thus, $\cspan{\aset{v,p}}$ consists of two congruent triangles with vertex
	$\bar{v}$ in common. Assume, without loss of generality, that $\bar{s} < \bar{v}$
	and~$v^0 > p^0$, as shown in Figure~\ref{fig:max-distance-in-middle}.

	\begin{figure}[ht]
	\center
		\begin{tikzpicture}[scale=0.7]
			\coordinate (s0) at (0,0);
			\coordinate (s1) at (1,2);
			\coordinate (v0) at (6,0);
			\coordinate (v1) at (-1,2);
			\coordinate (p0) at ($(s0) + (2,0)$);
			\coordinate (p1) at ($(s1) + (2,0)$);
			\coordinate (vm) at ($(v0)!0.5!(v1)$);
			\fill[gray!35] (p0) -- (vm) -- (v0) -- cycle;
			\fill[gray!35] (v1) -- (intersection of v1--v0 and s1--s0) -- (s1) -- cycle;
			\fill[gray!80] (s1) -- (intersection of v1--v0 and s1--s0) -- (vm) -- (p1) -- cycle;
			\draw[dashed] (vm) -- ($(s0)!0.5!(s1)$);
			\draw[thick] (s0) -- (s1);
			\draw[thick] (v0) -- (v1);
			\draw[thin] (p0) -- (p1);
			\segend{(s0)}{1.7pt}; \node[below] at (s0) {\footnotesize $s^0$};
			\segend{(s1)}{1.7pt}; \node[above] at (s1) {\footnotesize $s^1$};
			\segend{(v0)}{1.7pt}; \node[below] at (v0) {\footnotesize $v^0$};
			\segend{(v1)}{1.7pt}; \node[above] at (v1) {\footnotesize $v^1$};
			\segend{(p0)}{1.7pt}; \node[below] at (p0) {\footnotesize $p^0$};
			\segend{(p1)}{1.7pt}; \node[above] at (p1) {\footnotesize $p^1$};
			\segend{(vm)}{1.7pt}; \node[right] at ($(vm) + (0.03,0.15)$) {\footnotesize $\bar{v}$};
			\node at (3,0.4) {\footnotesize $D$};
			\node at (1.9,1.6) {\footnotesize $B$};
			\node at (0.5,1.82) {\footnotesize $A$};
			\node at (1.4,0.7) {\footnotesize $C$};
		\end{tikzpicture}
	\caption{\label{fig:max-distance-in-middle} Calculating $\diam(\aset{s,v})$ as a function of $\diam(\aset{p,v})$.}
\end{figure}
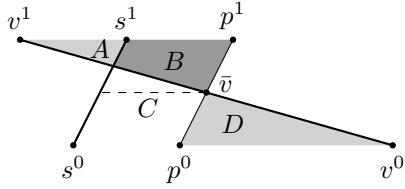
	
	Let $A$, $B$, $C$, and $D$ be the regions indicated in Figure~\ref{fig:max-distance-in-middle}.
	If~$s$ does not intersect~$v$ or if~$s$ intersects~$v$ at one of its extremities, then~$A$
	is empty. Note that 
	\begin{align*}
		\diam(\aset{s,v}) & = \area(A) + \area(C) + \area(D),\\
		\diam(\aset{p,v}) & = \area(A) + \area(B) + \area(D),\\
		\diam(\aset{s,p}) & = \area(B) + \area(C).
	\end{align*}
	Moreover, as the two triangles in the $\cspan{\aset{v,p}}$ are congruent, we have
	$\area(D) = \area(A) + \area(B)$ and, consequently, $\area(D) \geq \area(B)$. 
	Therefore
	\begin{align*}
		\diam(\aset{s,v}) & = \area(A) + (\diam(\aset{s,p}) - \area(B)) + \area(D)\\
		& \geq \diam(\aset{s,p}) > r.\qedhere
	\end{align*}
\end{proof}

\begin{lemma}\label{lemma:max-ball}
	Let $S$ be a set of trajectories and $s$ be a trajectory of $S$. If~$r$ is such that
	$\diam(\aset{s,a}) \leq r$ for all $a$ in $S$, then $\diam(S) \leq (2 + \sqrt{2})r$.
\end{lemma}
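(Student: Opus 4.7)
I would translate the whole configuration so that $x_s(t) \equiv 0$; using the identity $\diam(\{s,a\}) = \int_0^1 |x_a(t) - x_s(t)|\,dt$ that appears in the proof of Lemma~\ref{lemma:diameter-is-metric}, the hypothesis becomes $\int_0^1 |x_a(t)|\,dt \leq r$ for every $a \in S$. Set $R(t) = \max_{a \in S} x_a(t)$, $L(t) = \min_{a \in S} x_a(t)$, and $W(t) = R(t) - L(t)$, so that $\diam(S) = \int_0^1 W(t)\,dt$. The key structural observation is that $R$ and $-L$ are pointwise maxima of finitely many affine functions, hence piecewise-linear convex functions on $[0,1]$. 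Applying the chord-above-graph upper bound for convex functions separately on the two half-intervals $[0,1/2]$ and $[1/2,1]$ will give
\[
  \diam(S) \;\leq\; \frac{W(0) + 2\,W(1/2) + W(1)}{4},
\]
reducing the problem to controlling $W$ at just three points.

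The value at the midpoint is immediate: the contrapositive of Lemma~\ref{lemma:max-distance-in-middle} gives $|\bar{a}| = |x_a(1/2)| \leq r$ for every $a \in S$, and since $s \in S$ pins $R(1/2) \geq 0 \geq L(1/2)$, we obtain $W(1/2) \leq 2r$.

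The hard part will be the bound $W(0), W(1) \leq 2(1+\sqrt{2})\,r$, which reduces to the pointwise claim that any single trajectory $a$ satisfies $|x_a(0)|, |x_a(1)| \leq (1+\sqrt{2})\,r$. To prove this, I would parameterize $a$ by $p = x_a(0)$ and $q = x_a(1)$ and split on the sign of $pq$. When $pq \geq 0$ the integral $\int_0^1 |x_a|\,dt$ simplifies to $|p+q|/2$, giving $|p|, |q| \leq 2r \leq (1+\sqrt{2})r$. When $pq < 0$, WLOG $p > 0 > q$, a short calculation yields $\int_0^1 |x_a|\,dt = (p^2 + q^2)/(2(p-q))$, and the constraint $\leq r$ rearranges into the disk inequality $(p - r)^2 + (q + r)^2 \leq 2r^2$; the maximum of $p$ on this disk is $(1+\sqrt{2})r$, attained at $(p,q) = ((1+\sqrt{2})r, -r)$, and $|q|$ obeys the same bound by symmetry. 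The other sign case $p < 0 < q$ is handled by reflecting.

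Plugging $W(0), W(1) \leq 2(1+\sqrt{2})r$ and $W(1/2) \leq 2r$ into the displayed inequality then yields
\[
  \diam(S) \;\leq\; \frac{2(1+\sqrt{2})r + 4r + 2(1+\sqrt{2})r}{4} \;=\; (2+\sqrt{2})\,r,
\]
as claimed.
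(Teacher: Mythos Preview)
Your proof is correct and follows essentially the same approach as the paper: both bound $|x_a(0)-x_s(0)|$ and $|x_a(1)-x_s(1)|$ by $(1+\sqrt{2})r$, bound the midpoint width by $2r$ via Lemma~\ref{lemma:max-distance-in-middle}, and then use convexity of the left/right envelopes to interpolate. The only difference is cosmetic---the paper phrases the convexity step geometrically (the span is contained in an explicit hexagon whose area is $(2+\sqrt{2})r$), whereas you phrase it analytically as a trapezoidal bound on $\int_0^1 W(t)\,dt$; you also supply a full derivation of the $(1+\sqrt{2})r$ endpoint bound, which the paper merely asserts.
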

\begin{proof}
	Notice that, for any trajectory $a$ such that $|a^0 - s^0| > (1+\sqrt{2})r$, we have
	$\diam(\aset{s,a}) > r$, so $a$ does not belong to $S$. Any trajectory $a$ in $S$
	such that $|a^0 - s^0| > 2r$ intersects $s$ and $|a^1 - s^1| < 2r$. Both observations
	are true if we exchange $a^0 - s^0$ with $a^1 - s^1$.

	By Lemma~\ref{lemma:max-distance-in-middle}, for every trajectory~$a$ in $S$, 
	we have $|\bar{s}-\bar{a}| \leq r$.
	Also, the left side of $S$ is convex and the right side of $S$ is concave with respect to the
	vertical axis. Hence, we can conclude that the \cspan{} of $S$ is inside the polygon~$P$
	highlighted in Figure~\ref{fig:max-ball}.
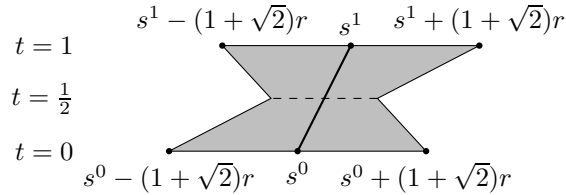
\begin{figure}[ht]
	\center
	\begin{tikzpicture}[scale=0.7]
		\coordinate (c0) at (0,0);
		\coordinate (c1) at (1,2);
		\coordinate (nw) at (-1.41421,2);
		\coordinate (ne) at (3.41421,2);
		\coordinate (sw) at (-2.41421,0);
		\coordinate (se) at (2.41421,0);
		\coordinate (w) at (-0.5,1);
		\coordinate (e) at (1.5,1);
		\draw[fill=gray!50] (nw) -- (ne) -- (e) -- (se) -- (sw) -- (w) -- cycle;
		\draw[dashed] (e) -- (w);
		\draw[thick] (c0) -- (c1);
		\node[above] at (nw) {\footnotesize $s^1 - (1+\sqrt{2})r$};
		\node[above] at (c1) {\footnotesize $s^1$};
		\node[above] at (ne) {\footnotesize $s^1 + (1+\sqrt{2})r$};
		\node[below] at (c0) {\footnotesize $s^0$};
		\node[below] at (se) {\footnotesize $s^0 + (1+\sqrt{2})r$};
		\node[below] at (sw) {\footnotesize $s^0 - (1+\sqrt{2})r$};
		\segend{(c0)}{1.7pt};
		\segend{(c1)}{1.7pt};
		\segend{(nw)}{1.7pt};
		\segend{(ne)}{1.7pt};
		\segend{(sw)}{1.7pt};
		\segend{(se)}{1.7pt};
		\node[left] at (-4,0) {\footnotesize $t = 0$};
		\node[left] at (-4,1) {\footnotesize $t = \frac{1}{2}$};
		\node[left] at (-4,2) {\footnotesize $t = 1$};
	\end{tikzpicture}
	\caption{\label{fig:max-ball}Polygon $P$ containing the \cspan{} of $S$.}
\end{figure}

The area of $P$ is given by the sum of the area of two symmetric trapezoids of height $1/2$:
\[ \area(P) = 2 \left(\frac{1}{2}\cdot\frac{2(1+\sqrt{2})r + 2r}{2} \right) = (2 + \sqrt{2})r. \]
Therefore $\diam(S) \leq \area(P) \leq (2 + \sqrt{2})r$.
\end{proof}

If the maximum distance to a center chosen by a 2-approximation for the \kcentro{} is $r$, 
then $r \leq 2\,\optmd(S,k)$. Indeed, if a cluster
$C$ is such that ${\diam(C) \leq q}$, then any two trajectories in~$C$ are at a distance at
most~$q$. Since, in a cluster produced by the algorithm, any two trajectories are at a 
distance at most~$r$ from the center, then~$q \leq 2r$. 
Using the lemma and the bound on~$r$, we deduce that the described algorithm is a
$2(2+\sqrt{2})$-approximation. 

\subsection{Second approximation}\label{sec:max-second-approximation}

Next we describe a better approximation for \kcmax{}, inspired in the bottleneck method
of Hochbaum and Shmoys~\cite{HochbaumShmoys86}. 
First we describe a greedy algorithm that, given a set~$S$ of~$n$ trajectories and a positive number~$D \geq 0$,
obtains a $k'$-clustering~$\clus_D$ with $\md(\clus_D) \leq 2.71\,D$. If $k' > k$, we will have a certificate
that ${\optmd(S,k) > D}$. If $k' = k$, the algorithm gives a~$2.71$-approximation.

Remember that a trajectory $s$ in $S$ is the
\bleftmost{} if~$x_s(0)$ is minimum in~$S$ and, in case of ties,~$x_s(1)$
is minimum within the tied trajectories.
\begin{tabbing}
	\qquad \=\qquad \=\qquad \=\qquad \kill
	\gp{}($S$, $D$) \+\\
		\textbf{if} $S = \emptyset$ \textbf{then return} $\emptyset$\\
		let $s$ be the \bleftmost{} trajectory in $S$\\
		$C \gets \aset{}$\\
		\textbf{for} each $s' \in S$ \textbf{do} \\
			\> \textbf{if} $\diam(\{s, s'\}) \leq D$ \textbf{then} $C \gets C \cup \aset{s'}$\\
		\textbf{return} $\aset{C} \cup \gp{}(S \setminus C,\, D)$
\end{tabbing}

It is clear that the algorithm \gp{} terminates. Since at each call
at least one trajectory is removed from the set of trajectories, the algorithm
does at most $n - 1$ recursive calls. Each call takes time $\Oh(n)$ excluding the recursive call, 
since the diameter of a subset of two trajectories can be calculated in constant time. 
Finding the \bleftmost{} trajectory in $S$ can also be done in constant time if
we preprocess $S$ in time $\Oh(n \log n)$, sorting the trajectories accordingly.
Thus, \gp{} has time complexity $\Oh(n^2)$.

\begin{lemma}
	\label{lemma:max-semi-ball-area}
	Each cluster built by \gp{}$(S,\, D)$ has diameter at most $(4 + \sqrt{2})D/2$.
\end{lemma}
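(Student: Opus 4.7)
The plan is to write the cluster's diameter as $\diam(C) = \int_0^1 (r(t) - \ell(t))\,dt$, where $r(t)$ and $\ell(t)$ are the right and left sides of $C$, split the integrand as $(r(t) - s^t) + (s^t - \ell(t))$, bound each of these two convex functions pointwise at $t = 0, 1/2, 1$, and then integrate using the chord upper bound that convexity provides on each half-interval $[0, 1/2]$ and $[1/2, 1]$. Every trajectory $a \in C$ satisfies $\diam(\{s, a\}) \leq D$ by the greedy rule, and $a^0 \geq s^0$ because $s$ is the \bleftmost{} trajectory of the set passed to this call; these two properties do all the work. Both summands are convex as maxima of affine functions of $t$ (of the $a^t - s^t$ and $s^t - a^t$, respectively, over $a \in C$).

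The six pointwise bounds I will establish are $r(0) - s^0 \leq (1+\sqrt{2})D$, $r(1/2) - s^{1/2} \leq D$, $r(1) - s^1 \leq 2D$, $s^0 - \ell(0) = 0$, $s^{1/2} - \ell(1/2) \leq D$, and $s^1 - \ell(1) \leq (1+\sqrt{2})D$. The $(1+\sqrt{2})D$ bounds at $t = 0$ and $t = 1$ come from the estimate used in the proof of Lemma~\ref{lemma:max-ball}, namely that any $a$ with $|a^0 - s^0| > (1+\sqrt{2})D$ or $|a^1 - s^1| > (1+\sqrt{2})D$ would force $\diam(\{s,a\}) > D$. The middle bounds at $t = 1/2$ are Lemma~\ref{lemma:max-distance-in-middle}, and $s^0 - \ell(0) = 0$ is exactly the \bleftmost{} property.

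The main obstacle---and what sharpens the analysis beyond the $(2+\sqrt{2})D$ bound one inherits directly from Lemma~\ref{lemma:max-ball}---is the improved bound $r(1) - s^1 \leq 2D$ in place of the generic $(1+\sqrt{2})D$. The argument is that any $a \in C$ contributing positively to $r(1) - s^1$ has $a^1 > s^1$; together with $a^0 \geq s^0$ this prevents $s$ and $a$ from crossing on $[0,1]$, so $\diam(\{s,a\}) = ((a^0 - s^0) + (a^1 - s^1))/2 \leq D$ forces $a^1 - s^1 \leq 2D$. With all six bounds in hand, convexity pushes each summand under the chord joining its values at consecutive anchor points; integrating those chord bounds over $[0, 1/2]$ and $[1/2, 1]$ gives $\int_0^1 (r(t) - s^t)\,dt \leq (5+\sqrt{2})D/4$ and $\int_0^1 (s^t - \ell(t))\,dt \leq (3+\sqrt{2})D/4$, which sum to the claimed $(4+\sqrt{2})D/2$.
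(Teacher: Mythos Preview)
Your proof is correct and follows essentially the same approach as the paper: both establish the same six pointwise bounds (at $t=0,\tfrac12,1$, on each side of $s$) and then use convexity of the left/right sides to interpolate, yours via chord upper bounds on the integrand, the paper's via enclosing $\cspan{C}$ in the hexagon $P$ whose vertices are exactly your six anchor values and whose area is computed as the same two trapezoids $(5+\sqrt{2})D/4 + (3+\sqrt{2})D/4$. The only difference is presentational---you phrase the convexity step analytically, the paper geometrically---but the content and the key sharpening ($r(1)-s^1\leq 2D$ from non-crossing, $s^0-\ell(0)=0$ from leftmostness) are identical.
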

\begin{proof}
	The proof of this lemma is similar to the proof of Lemma~\ref{lemma:max-ball}, but here we have
	a better bound on the \cspan{} of the cluster. 
    Let $s$ be the \bleftmost{} trajectory in $S$ and $C$ be the cluster returned by 
    \gp{}$(S, D)$ that contains $s$.  Any trajectory $u$ in $C$
    has $u^0 \geq s^0$, since~$s$ is the \bleftmost{} one. Also, we have that
    $\diam(\aset{s, u}) \leq D$, so $s^1-(1+\sqrt{2})D \leq u^1 \leq s^1+2D$ and $u^0 \leq s^0 +
    (1+\sqrt{2})D$.  Thus, the \cspan{} of~$C$ is contained in the polygon~$P$ highlighted in 
    Figure~\ref{fig:semi-ball}, whose area is the
    sum of two trapezoids of height~$1/2$:
    \[ \area(P) \leq \frac{((1{+}\sqrt{2})D + 2D)\frac{1}{2}}{2} + \frac{((2D + (1{+}\sqrt{2})D) + 2D)\frac{1}{2}}{2}
	 = \frac{(4{+}\sqrt{2})D}{2}. \]%
	\begin{figure}[ht]
		\begin{center}
			\begin{tikzpicture}[scale=0.7]
				\coordinate (l0) at (0,0);		
				\coordinate (l1) at (0,2);
				\coordinate (nl1) at (-2.4142,2);
				\coordinate (pl0) at (2.4142,0);
				\coordinate (pl1) at (2,2);
				\coordinate (i1) at (1,1);
				\coordinate (i2) at (-1,1);
				\draw[fill=gray!50] (l0) -- (pl0) -- (i1) -- (pl1) -- (nl1) -- (i2) -- cycle;
				\draw[dashed] (i2) -- (i1);
				\draw[thick] (l0) -- (l1); \segend{(l0)}{1.7pt}; \segend{(l1)}{1.7pt};
				\node[above] at (nl1) {\footnotesize $s^1 - (1+\sqrt{2})D$};
				\node[above] at (l1) {\footnotesize $s^1$};
				\node[above] at (pl1) {\footnotesize $s^1 + 2D$};
				\node[below] at (l0) {\footnotesize $s^0$};
				\node[below] at (pl0) {\footnotesize $s^0 + (1+\sqrt{2})D$};
				\segend{(nl1)}{1.7pt};
				\segend{(pl1)}{1.7pt};
				\segend{(pl0)}{1.7pt};
				\node[left] at (-4,0) {\footnotesize $t = 0$};
				\node[left] at (-4,1) {\footnotesize $t = \frac{1}{2}$};
				\node[left] at (-4,2) {\footnotesize $t = 1$};
			\end{tikzpicture}
		\end{center}
		\caption{\label{fig:semi-ball} Polygon containing the span of $C$.}
	\end{figure}
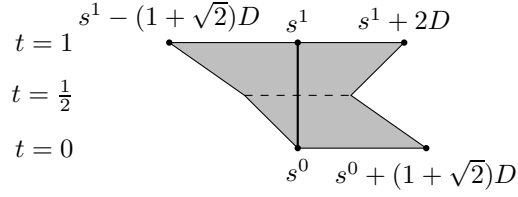%
	Therefore, $\diam(C) \leq \area(P) \leq (4+\sqrt{2})D/2$.
\end{proof}

\begin{lemma}\label{lemma:greed-certificate}
	Let $S$ be a set of trajectories. 
	Let $k$ be a positive integer and~$D$ a positive real.
	If \gp{}$(S,k)$ returns a $k'$-clustering with $k' > k$, then $\optmd(S,k) > D$.
\end{lemma}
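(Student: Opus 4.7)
My plan is to prove the contrapositive: assuming $\optmd(S,k) \leq D$, I will show that \gp{}$(S,D)$ returns a clustering with at most $k$ clusters. So let $\clus^* = \{C^*_1,\dots,C^*_k\}$ be an optimal $k$-clustering of $S$, with $\diam(C^*_j) \leq D$ for every $j$.

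Let $\clus_D = \{C_1, C_2, \ldots, C_{k'}\}$ be the clusters returned by \gp{}$(S,D)$, in the order they are created, and let $s_i$ be the \bleftmost{} trajectory chosen during the $i$-th recursive call (so $s_i \in C_i$). Suppose, for contradiction, that $k' > k$. Then among $s_1,\ldots,s_{k'}$ there are at least $k+1$ distinct trajectories, so by the pigeonhole principle there exist indices $i < j$ with $s_i$ and $s_j$ belonging to the same cluster $C^*_p$ of the optimal clustering.

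The key observation is that the diameter is monotone under taking subsets: if $\{s_i,s_j\} \subseteq C^*_p$, then
\[ \cspan{\{s_i,s_j\}} \ \subseteq \ \cspan{C^*_p}, \]
because taking the min/max over a smaller set can only shrink the right side and raise the left side of the span. Consequently,
\[ \diam(\{s_i,s_j\}) \ \leq \ \diam(C^*_p) \ \leq \ D. \]

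Now I reach the contradiction. When the $i$-th recursive call executes, $s_i$ is selected as the \bleftmost{} of the current set; in particular, $s_j$ has not yet been removed, since $s_j$ is chosen only in the (later) $j$-th recursive call. Thus $s_j$ is present, and the \textbf{for}-loop of \gp{} examines the pair $\{s_i,s_j\}$. Because $\diam(\{s_i,s_j\}) \leq D$, the algorithm adds $s_j$ to $C_i$ and removes it from the set before making the next recursive call. But then $s_j$ cannot be the \bleftmost{} trajectory of the set at step $j$, contradicting the choice of $s_j$. Hence $k' \leq k$, which is what we needed.

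The only step that requires care is the monotonicity claim $\diam(\{s_i,s_j\}) \leq \diam(C^*_p)$; this is not an obstacle but it is the one place where we use the specific geometric definition of diameter (area of the \cspan{}) rather than just the metric axioms. Everything else is a straightforward pigeonhole plus a direct contradiction with the greedy rule.
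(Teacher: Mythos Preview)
Your proof is correct and is essentially the contrapositive of the paper's argument. The paper argues directly that the leftmost trajectories $s_1,\ldots,s_{k'}$ satisfy $\diam(\{s_i,s_j\}) > D$ for all $i\neq j$ (by the greedy rule), hence form a certificate that any clustering with maximum diameter at most $D$ needs at least $k'$ clusters; you instead assume such a $k$-clustering exists, pigeonhole two $s_i$'s into one of its clusters, and contradict the greedy rule via the monotonicity $\diam(\{s_i,s_j\})\leq\diam(C^*_p)$. The underlying idea is identical; your version has the virtue of making the monotonicity step explicit, which the paper uses implicitly in its final sentence.
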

\begin{proof}
	Let ${\clus = \aset{C_1,C_2,\ldots,C_{k'}}}$ be the clustering returned
	by \gp{}($S$, $D$).
	For $1 \leq i \leq k'$, let $s_i$ be the \bleftmost{} trajectory in~$C_i$. If $k' > k$, then 
	$\aset{s_1,s_2,\ldots,s_{k'}}$ is a certificate that $\optmd(S,k) > D$.

	Let $s_i$ and $s_j$ be any two distinct trajectories in $\aset{s_1,s_2,\ldots,s_{k'}}$. 
	Since~$s_i$ and $s_j$ belong to different clusters, $\diam(\aset{s_i,s_j}) > D$, otherwise 
	\gp{} would have put both in the same cluster. Therefore, any clustering $\clus'$ of~$S$ 
	with $\md(\clus') \leq D$ must have at least $k' > k$ clusters.
\end{proof}

As shown by Lemmas~\ref{lemma:max-semi-ball-area} and~\ref{lemma:greed-certificate}, 
\gp{} is similar to what Hochbaum and Shmoys~\cite{HochbaumShmoys86} call a
relaxed decision procedure for bottleneck problems. However,
we do not know how to restrict the possible values of $\optmd(S,k)$ to a set of polynomial size.
Thus we do an approximate binary search to get the parameter~$D$ as close as we want to $\optmd(S,k)$.
This is done by the algorithm \bsearch{} below, for any $\eps > 0$. 
\begin{tabbing}
	\qquad \=\qquad \=\qquad \=\qquad \=\qquad \kill 
	\bsearch{}($S$, $k$) \+ \\ 
		$a \gets 0$\\ 
		$b \gets \diam(S)$ \\
		$\delta \gets \frac{2\eps}{4+\sqrt{2}}\min_{u,v \in S,\, u \neq v}\diam(\aset{u,v})$\\ 
		\defi{while} ($b - a > \delta$) \defi{do} \+ \\ 
			$D \gets \frac{a + b}{2}$ \\
			$\clus \gets \gp{}(S, D)$ \\ 
			\defi{if} $|\clus| > k$ \defi{then} $a \gets D$ \defi{else} $b \gets D$ \\ 
		\- \kill 
		\defi{if} $|\clus| > k$ \defi{then return} $\gp{}(S, b)$ \defi{else return} $\clus$
\end{tabbing}

After~$i$ iterations, the search interval is halved $i$ times
and, thus, has length $2^{-i}\diam(S)$. 
So, it takes at most $\log \diam(S) - \log \delta$ iterations, for~$\delta$ 
as in the algorithm, to decrease the search interval length to~$\delta$.

\begin{thm}\label{thm:max-area-approx}
	Let $S$ be a set of $n$ trajectories and $k$ be a positive integer.
	For every $\eps > 0$, \bsearch{}$(S, k)$ is a $((4+\sqrt{2})/2 +
    \eps)$-approximation for \kcmax{}. 
\end{thm}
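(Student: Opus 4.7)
The plan is to track, throughout the binary search, the natural invariant that $\gp(S,a)$ returns more than $k$ clusters while $\gp(S,b)$ returns at most~$k$. Assuming $|S|>k$ (otherwise the problem is trivial with value $0$), the invariant holds initially, since $\gp(S,0)$ produces $|S|$ singletons and $\gp(S,\diam(S))$ produces a single cluster. Each iteration preserves it by the way $a$ and $b$ are updated depending on whether $|\clus|>k$ or not. By Lemma~\ref{lemma:greed-certificate}, the left endpoint satisfies $\optmd(S,k)>a$ throughout, and by Lemma~\ref{lemma:max-semi-ball-area}, $\gp(S,b)$ is a valid $k$-clustering whose maximum diameter is at most $(4+\sqrt{2})b/2$.

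I would then observe that in both return branches the output is exactly $\gp(S,b)$ for the final value of $b$: if the last loop iteration produced $|\clus|>k$ the algorithm explicitly returns $\gp(S,b)$; otherwise the last iteration set $b\gets D$, so $\clus=\gp(S,D)=\gp(S,b)$. The loop exits with $b-a\leq\delta$, which combined with $\optmd(S,k)>a$ gives $b<\optmd(S,k)+\delta$, and therefore
\[
\md(\gp(S,b)) \;\leq\; \frac{4+\sqrt{2}}{2}\,\bigl(\optmd(S,k)+\delta\bigr).
\]

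It remains to convert the additive slack $\delta$ into a multiplicative $\eps$ factor. Plugging in the chosen value $\delta=\tfrac{2\eps}{4+\sqrt{2}}\min_{u\neq v}\diam(\aset{u,v})$ yields an additive overhead of exactly $\eps\cdot\min_{u\neq v}\diam(\aset{u,v})$. Since $|S|>k$, any $k$-clustering of $S$ must contain some cluster with two distinct trajectories $u,v$, so $\optmd(S,k)\geq\diam(\aset{u,v})\geq\min_{u'\neq v'}\diam(\aset{u',v'})$, and the overhead is absorbed into $\eps\cdot\optmd(S,k)$. This yields $\md(\gp(S,b))\leq\bigl((4+\sqrt{2})/2+\eps\bigr)\optmd(S,k)$, as required. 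For the running time, the loop halves $b-a$ at each step and thus performs $\Oh(\log(\diam(S)/\delta))$ iterations, each dominated by a call to \gp{} in $\Oh(n^2)$ time.

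The main obstacle, as I see it, is not any single estimate but the care needed to set up the invariant cleanly: one must verify that the returned clustering is always $\gp(S,b)$ for the final $b$ (so that Lemma~\ref{lemma:max-semi-ball-area} applies), and that the assumption $|S|>k$ is precisely what supplies the lower bound $\optmd(S,k)\geq\min_{u\neq v}\diam(\aset{u,v})$ used to convert the additive slack into the desired multiplicative approximation ratio.
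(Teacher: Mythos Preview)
Your proposal is correct and follows essentially the same approach as the paper's proof: both use Lemma~\ref{lemma:greed-certificate} to obtain $\optmd(S,k)>a$, Lemma~\ref{lemma:max-semi-ball-area} to bound $\md(\clus)\leq (4+\sqrt{2})b/2$, the termination condition $b-a\leq\delta$ to get $b\leq\optmd(S,k)+\delta$, and the pigeonhole lower bound $\optmd(S,k)\geq\min_{u\neq v}\diam(\aset{u,v})$ (valid when $k<n$) to convert the additive slack into a multiplicative~$\eps$. Your write-up is in fact slightly more explicit than the paper's in verifying the loop invariant at initialization and in checking that the returned clustering is $\gp(S,b)$ for the final value of~$b$ in both branches.
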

\begin{proof}
	First, notice that the algorithm always returns a $k$-clustering, 
	since the value of $b$ in the algorithm starts with $\diam(S)$. 
	Also, from the binary search, we know that $\optmd(S,k)$ is in the 
	interval~$[a,b]$ and at the end of the search we have $b - a \leq \delta$. 
	Thus, $b \leq \optmd(S,k) + \delta$.

	Suppose that $k < n$, otherwise it is trivial to find an optimal solution. 
	Hence, one of the clusters in any optimal $k$-clustering must have at least 
	two trajectories and $\optmd(S,k) \geq \min_{u,v \in S,\, u \neq v}\diam(\aset{u,v})$.
	
	Since $\delta = (2\eps/(4+\sqrt{2}))\min_{u,v \in S,\, u \neq v}\diam(\aset{u,v})$,
	\begin{align*}
		\md(\clus) & \leq \frac{4 + \sqrt{2}}{2}\,b \leq \frac{4 + \sqrt{2}}{2}(\optmd(S,k) + \delta)\\
		& = \frac{4 + \sqrt{2}}{2}\optmd(S,k) + 
		\frac{4 + \sqrt{2}}{2} \frac{2\eps}{4+\sqrt{2}}\min_{\substack{u,v \in S\\ u \neq v}}\diam(\aset{u,v})\\
		& = \frac{4 + \sqrt{2}}{2}\optmd(S,k) + \eps\,\min_{\substack{u,v \in S\\ u \neq v}}\diam(\aset{u,v})\\
		& \leq \left(\frac{4 + \sqrt{2}}{2} + \eps\right)\optmd(S,k).
	\end{align*}

	By the choice of $\delta$, \bsearch{}($S$, $k$) terminates within
	at most $\log \diam(S) + \log (4 + \sqrt{2}) - \log (2\eps \min_{u,v \in S,\, u\neq v}\diam(\aset{u,v}))$ 
	iterations. Each iteration consists of a call to \gp{} which takes time~$\Oh(n^2)$. 
	Thus, for a fixed~$\eps$, the algorithm runs in time polynomial in the input size.

	Therefore, \bsearch{}($S$, $k$) is a ${((4+\sqrt{2})/2 + \eps)}$-approximation for \kcmax{},
	for every $\eps > 0$.
\end{proof}

Notice that the value of~$\eps$ is not considered as part of the input when we say that \bsearch{} runs in time 
polynomial in the input size.  Fortunately, in the case of \bsearch{}, the time complexity is not only polynomial 
in the input size but also in~$1/\eps$.

\section{Final comments}

The formalization and study of kinetic versions of clustering problems seems quite intriguing
and challenging.  Of course it would be nice to address variants of the problems we addressed 
in higher dimensions, considering points moving in the plane, or in a 3D space.  Also, allowing 
the clusters to change with time, in a smooth way, seems reasonable and leads to interesting 
questions. 

We focused on the one-dimensional case, presenting polynomial-time algorithms for the \kcsoma{}
under some assumptions, and approximation algorithms for the \kcmax{}.  However, the complexity 
of the {\kcsoma{}} and the \kcmax{} remains open, that is, we do not know whether the \kcsoma{} 
and the \kcmax{} are NP-hard.  It would be nice to settle the complexity of these two problems, 
either by proving that they are NP-hard, or by presenting polynomial-time algorithms to solve 
them.  Meanwhile, achieving better approximations for \kcmax{} and for \kcsoma{} would also be 
nice.  In particular, as far as we know, it is possible that the algorithm described at the end 
of Section~\ref{sec:min-sum-areas}, that outputs an optimal well-separated $k$-clustering, 
achieves a constant approximation ratio for \kcsoma{}, that is, independent of the value of~$k$.

\bibliographystyle{amsalpha}
\bibliography{dyn-clus}

\end{document}